\newtheorem{theorem}{Theorem}
\newtheorem{definition}[theorem]{Definition}
\newtheorem{lemma}[theorem]{Lemma}
\newtheorem{demonstration}[theorem]{Demonstration}
\newtheorem{proposition}[theorem]{Proposition}
\newenvironment{proof}[1][Proof]{\textbf{#1.} }{\ \rule{0.5em}{0.5em}}
\definecolor{green}{rgb}{0.00,0.50,0.00}
\begin{document}

\title{A Quantum Kalman Filter-Based PID Controller}
\author{John E.~Gough \footnote{jug@ber.ac.uk}\\
Aberystwyth University, SY23 3BZ, Wales, United Kingdom}
\date{}
\maketitle

\begin{abstract}
 We give a concrete description of a controlled quantum stochastic dynamical model corresponding to a quantum system (a cavity mode) under going continual quadrature measurements, with a PID controller acting on the filtered estimate for the mode operator. Central use is made of the input and output pictures when constructing the model: these unitarily equivalent pictures are presented in the paper, and used to transfer concepts relating to the controlled internal dynamics to those relating to measurement output, and \text{vice versa}.
The approach shows the general principle for investigating mathematically and physically consistent models in
which standard control theoretic methods are to be extended to the quantum setting.
\end{abstract}

\section{Introduction}\label{sec:introduction}
There has been considerable interest in the last decade in applying measurement-based feedback to quantum systems, especially after recognition of the Laboratoire Kastler Brossel (LKB) photon box experiments
resulting in the 2012 joint Nobel Prize for Physics to Serge Haroche \cite{Paris_PB,Rouchon}. It is reasonable to believe that classical feedback concepts will find their way into quantum systems if quantum technologies are ever to be realized. In this paper we will focus on PID controller implementations. There are a number of motivations for this. First, PID controllers are widely used in practice and so a large body of knowledge exists about the underlying theory and about their applications. Secondly, they would be simple devices to include in design problems as they have only three gain parameters to adjust. A third reason, and to a large extent the one pursued in this paper, is that we like to be able to export standard control techniques to the quantum domain and, to this end, need to know that the underlying modelling framework is both mathematically and physically correct. 

In the quantum domain, we have to worry that our dynamical model does not violate physical requirements such as the Heisenberg uncertainty relations, complete positivity, etc. If we wish to take the measured output of a quantum system and perform classical manipulations such a information processing and feedback, we need to be concerned that we have a hybrid classical-quantum description. In the quantum world, information cannot be copy and so the jump off points in classical block diagrams cannot be applied to quantum information. In this paper, we will give a description of a quantum PID controller which feeds back (classical) information arising from measurement - and to this end we need a quantum Kalman filer. Just as valuable as the specific structure of the model, is the fact that it qualifies as a consistent controlled quantum dynamical system. In principle, this shows the way to more detailed model building and opens up the programme of realizing standard feedback techniques in the quantum domain.

The approach adopted here relies on the notion of a controlled quantum stochastic evolution described by Luc Bouten and Ramon van Handel \cite{BvH_ref,BvH}, see also \cite{vH_thesis}, and brings to the fore the distinction between the input and the output pictures which is already implicit in their work.

The layout of the paper is as follows: in Section \ref{sec:filter} we recall the basic concepts of classical and quantum filtering. In Section \ref{sec:IO_pictures}, we develop the concept of input and output pictures, and emphasize the importance of working between these pictures in order to build up a consistent model. Finally, in
Section \ref{sec:PID_controller} we show how to construct the PI controllers as instances of feedback to the Hamiltonian. Here we have mechanisms to reduce two distinct notions of error: the Kalman filter seeks to reduce the error between our estimate and the unknown state while the controller aims to minimize the error between the
estimated state and a given reference signal. The D component of the controller is more problematic as typically it will amplify noise, which is an issue here as the filtered state is stochastic. Normally one smooths out the signal using a low-pass filter and this effectively restores a PI controller model, however, we treat the general PID problem here for completeness. The situation was addressed in the pioneering work of Howard Wiseman \cite{Wiseman} and in our classification this equates with a D filter component: the photo-current $y$ fed back in in \cite{Wiseman} being the singular process obtained as derivative of the measured process $Y$. The treatment of this involves a modification of the coupling operator coefficients of the system to the input noise fields in addition to a Hamiltonian correction. We build on the description in \cite{G_JMP_2017} of a PID filter to construct the appropriate model for a PID controller.

\subsection{Classical Feedback}
In practical applications one may take the control to be some function of the state,
or, in the case where that is not accessible, an estimate for the state. The latter will involve a filter,
and we will describe this in the next section, so let us look at the former problem first. For simplicity, we will assume that our system of interest - \textit{the plant} in engineering parlance - and that
the relationship between the control input $u$ and the (partial) observations $y$ of the state is linear with transfer function $\mathsf{G}$. 
In Figure \ref{fig:feedback_block} we have a
reference signal $r$ input at the start. From this we subtract the observation $y$ and the error $e=r-y$ is passed through a \textit{controller} which we assume to
be another linear system with transfer function $\mathsf{K}$. We have used double lines to emphasis that all the signals are classical and so can in principle be copied,
combined, and manipulated without restriction. \textit{ \textbf{N.B.} We use double-lined arrows to denote classical signals in feedback block diagrams. Later we will use single line arrows for quantum (i.e., operator-valued) signals.}

\begin{figure}[htbp]
	\centering
		\includegraphics[width=0.750\textwidth]{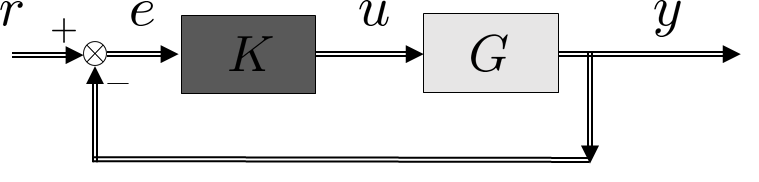}
	\caption{A plant $\mathsf{G}$ in a feedback loop with a controller $\mathsf{K}$.}
	\label{fig:feedback_block}
\end{figure}

We have $y=\mathsf{G} \, u \equiv \mathsf{GK} \,(r-y)$ from which we obtain
\begin{eqnarray}
y= \frac{\mathsf{GK}}{1+ \mathsf{GK}} \,  r .
\end{eqnarray}

The closed loop transfer function is $\mathsf{H} =\frac{\mathsf{GK}}{1+\mathsf{GK}}$ and we can consider design problems centering around our choice of the controller $\mathsf{K}$.

\subsection{PID Controllers}
By far the most widely used control algorithm in classical feedback applications are the PID controllers - with PI action being the most common form. For this class, we have the input-output relations
\begin{eqnarray}
u(t) = k_P e(t) + k_I \int_0^t e(t')dt' +
	k_D \dot{e} (t),
\end{eqnarray}
where the constants $k_P,k_I, k_D$ are the proportional, integral and derivative gains, respectively.
The associated transfer function is $\mathsf{K} (s) = k_P +s^{-1} k_I + s k_D$, see Figure \ref{fig:PID_filter}.

\begin{figure}[htbp]
	\centering
		\includegraphics[width=0.60\textwidth]{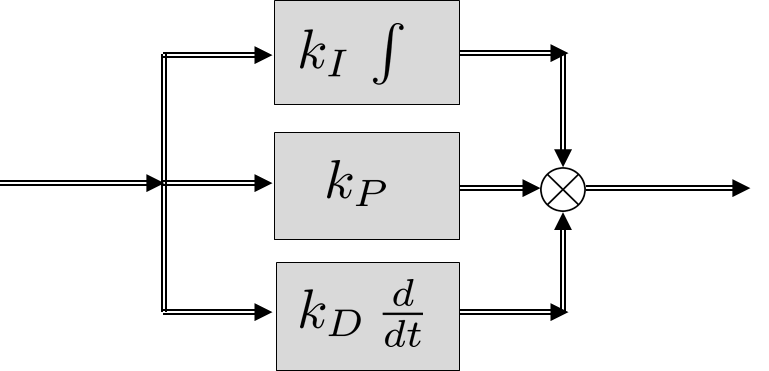}
	\caption{In a PID filter, the input $e(t)$ is transformed into an output $k_P e(t) + k_I \int_0^t e(t')dt' +
	k_D \dot{e} (t)$. The proportional term is the present error, the integral term is the accumulated past error, while the differential term makes account of what may happen in the future. }
	\label{fig:PID_filter}
\end{figure}

\section{Quantum Filtering}
\label{sec:filter}

\subsection{Classical Filtering \& its Quantum Version}
The goal of filtering theory is to make an optimal estimate of the state of a system under conditions where the system may have a noisy dynamics, and where we may have only partial information which itself is subject to observation noise.

\subsubsection{The Kalman Filter (Discrete Time)}
Consider a system whose state belongs to some configuration space $\Gamma$ - say equal to $\mathbb{R}^m$ for some $m$. The state at time $k$ is $x_k \in \Gamma$ and this is supposed to taken to evolve according to a linear but noisy dynamics of the form
\begin{eqnarray}
  x_{k} =A\,x_{k-1}+B\,u_{k}+w_{k}; \qquad (\text{one-step state update rule with noise}).
\end{eqnarray}
Here, the $u_k$ are controls and the $w_k$ are errors. Furthermore, our knowledge is based on measurements that reveal only partial information about the state and the  measurement process may also be corrupted by noise: specifically, we measure $ y_k$ at time $k$ where
\begin{eqnarray}
 y_k =H\,x_{k}+v_{k}; \qquad (\text{partial observations with noise}).
\end{eqnarray}
The random variables $\left( w_{k},v_{k}\right) _{k}$ are the dynamical noise and measurement noise, and are modelled as independent with variables with $w_{k}\sim N\left( 0,Q\right) $ and $v_{k}\sim N\left( 0,1\right) $. If the matrix $H$ may be invertible, then we can fully determine the state from the measurements, at least up to measurement disturbance error, but in general this need not be the case.

\begin{figure}[htbp]
	\centering
		\includegraphics[width=1.0\textwidth]{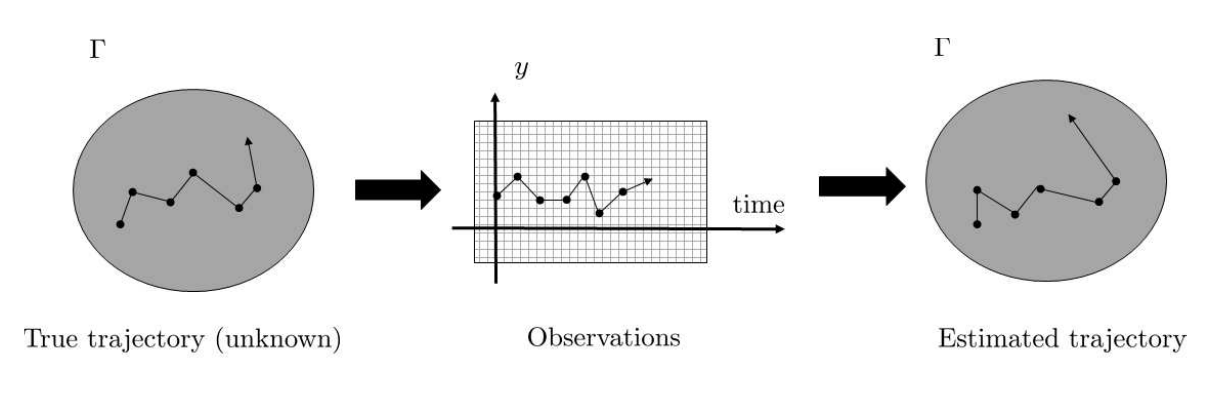}
		\caption{The system has state $x_k$ (unknown) in the configuration space $\Gamma$ at time $k$. Our knowledge of the system
		is based on observations $ y_k$ made at time $k$ containing partial information of the state. Based on our observations
		we estimate what the trajectory might have be.}
	\label{fig:KF_1}
\end{figure}

Our aim is to determine the optimal estimate $ \widehat{f(x_k )}$ for any function $f$ of the state variable $x_k$ 
conditional on the observations $ y_1 , \cdots ,  y_k$ up to time $k$, that is
\begin{eqnarray}
\widehat{f(x_k )} = \mathbb{E} [ f( x_k ) \mid  \mathscr{Y}_{k]} ]
\end{eqnarray}
where $\mathscr{Y}_{k]}$ is the sigma -algebra generated by $ y_1 , \cdots ,  y_k$. Let us introduce the optimal estimate of $x_{k}$  given $y_{1},\cdots , y_{k} $,
\begin{eqnarray}
 \hat{x}_{k} &=&   \mathbb{E} [  x_k  \mid  \mathscr{Y}_{k]} ], \qquad (\text{optimal estimate})
\end{eqnarray}
and also the optimal prediction of $x_{k}$  one time step earlier $\tilde{x}_{k}  =\mathbb{E} [  x_k  \mid  \mathscr{Y}_{k-1]} ]$. The estimation error is $ P_{k} =Var ( x_{k} - \hat{x}_k  ) $, and the prediction error is $\tilde{P}_{k} =Var ( x_{k}- \tilde{x}_k ) $.

The Kalman filter is a \textit{recursive} estimation scheme valid for linear Gaussian models.
Immediately after time $k-1$, we will have recorded the observations $ y_1 , \cdots , y_{k-1}$. Suppose that we have the best estimate $\hat{x}_{k-1}$ at that point, then we would predict $\tilde{x}_{k}=A\,\hat{x}_{k-1}+B\,u_{k}$. From this, we find the predicted state to be $\tilde{x}_{k}=A\,\hat{x}_{k-1}+B\,u_{k}$. Therefore, we predict what the next observation should be $ \tilde{y}_{k}  = H \, \tilde{x}_{k}$. At this stage we wait for the next measured value, $y_k$, to be made at time $ k$. What will be of interest is the \textit{residual error} which is the difference between the \textit{observed value}  of $ y_k$ and the \textit{expected value}, 
$\tilde{y}_k = H\,\tilde{x}_k$, we would have predicted \textit{a priori}. This is quantified as the \textit{innovations process}
\begin{eqnarray}
  I_k =y_k - \tilde{y}_k = y_k- H\, \tilde{x}_{k}, \qquad (\text{innovations}).
\end{eqnarray}
If this error is zero, then our prediction is spot on, and we just take our predicted $ \tilde{x}_{k}$
to be our estimate for state at time $k$. Otherwise, we apply an adjustment based on the discrepancy, and the Kalman filter sets $\hat{x}_{k}=\tilde{x}_{k}+H\tilde{P}_{k}\, I_{k}$ with the associated estimation error $P_{k}=\left( 1-H^{2}\tilde{P}_{k}\right) \tilde{P}_{k}$.

The Kalman filter is therefore a recursive procedure requiring minimal storage of information - effectively only the previous estimate $\hat{x}_{k-1}$ - 
and updated on the next available measurement $y_k$.  It is optimal in the sense that the conditional expectation (uniquely!) provides the least squares estimator for $f(x_k)$ for any
bounded measurable $f$.

\subsubsection{Nonlinear Continuous-time Kalman Filter}
There is a straightforward generalization of the Kalman filter to continuous time, known as the Kalman-Bucy filter.
We will derive it shortly. It is again recursive, but attempts to move away from Gaussian noise or linear dynamics 
result in filters that are difficult to implement. A popular saying in the field was (and arguably still is)
\textit{If we can put a man on the moon, why can't we find a nonlinear filter?}

Let us begin by considering a continuous-time nonlinear-dynamical extension of the estimation problem where the state undergoes a diffusion.
For simplicity we take $\Gamma = \mathbb{R}$, and the continuous time equations are taken as
\begin{eqnarray*}
dX_{t} &=&v\left( X_{t}\right)dt+\sigma \left( X_{t}\right) dW_{t}, \qquad (\text{dynamics});\\
dY_{t} &=&h\left( X_{t}\right) dt+dV_{t} , \qquad (\text{observations});
\end{eqnarray*}
with independent Wiener processes $W_t$ (dynamical noise) and $V_t$ (observational noise).
If we set $f_{t}\equiv f\left( X_{t}\right) $, then from the It\={o} calculus $df_{t}=\mathscr{L}f_{t}\,dt+\sigma _{t}f_{t}^{\prime }\,dW_{t} $,
where the generator of the diffusion process is $ \mathscr{L}f=vf^{\prime }+\frac{1}{2}\sigma ^{2}f^{\prime \prime }$.

The optimal estimate of $f_{t}$, given the observations up to time $t$, will be the conditional; expectation 
$\hat{f}_{t}=\mathbb{E}\left[ f\left( X_{t}\right) |\mathscr{Y}_{t}\right] $, and this will satisfy the \emph{Kushner-Stratonovich equation}
\begin{eqnarray}
d\widehat{f}_{t}=\widehat{\mathscr{L}f}_{t}\,dt+\left\{ \widehat{fh}%
_{t}-\widehat{f}_{t}\widehat{h}_{t}\right\} dI_{t}
\end{eqnarray}
where the \textit{innovations process} is 
\begin{eqnarray}
dI_{t}=dY_{t}-\widehat{h}_{t}\,dt.
\end{eqnarray}
The filter enjoys the least squares property that
\begin{eqnarray}
\mathbb{E} \big[ | \hat{f}_t - f_t |^2 \big] \leq \mathbb{E} \big[ | M - f_t |^2 \big], \qquad (\forall M
\;
\mathscr{Y}_{t]} - \mathrm{measurable}).
\label{eq:LS}
\end{eqnarray}

Statistically, the innovations process will be a Wiener process. This fact underscores the property that the filter is so efficient at extracting information from the measurement process, that afterwards we are just left with white noise! We now give a quick derivation of the Kushner-Stratonovich equation.

\begin{demonstration}[The Kushner-Stratonovich equation]
We use the characteristic function method. Let $\alpha (\cdot )$ be an arbitrary function of time, and define a stochastic
process $C$ by the stochastic differential equation
\begin{eqnarray}
dC_{t}=\alpha \left( t\right) C_{t}dY_{t}
\end{eqnarray}
with $C_{0}=1$. We now make an ansatz that $\hat{f}_{t}$ satisfies a
stochastic differential equation of the form
\begin{eqnarray}
d\hat{f}_{t}=\beta _{t}dt+\gamma _{t}dY_{t}
\end{eqnarray}
where $\beta $ and $\gamma $ are stochastic processes adapted to the
measurement process $Y$.

The least squares property (\ref{eq:LS}) now implies the identity
\begin{eqnarray*}
\mathbb{E}\left[ \left( f_{t}-\hat{f}_{t}\right) C_{t}\right] =0
\end{eqnarray*}
and differentiating wrt. time gives $I+II+III=0$ where
\begin{eqnarray*}
I &=& \mathbb{E}\left[ \left( df_{t}-d\hat{f}_{t}\right) C_{t}\right]  =\mathbb{E}\left[ \left( \mathscr{L}f_{t}-\beta _{t}-\gamma
_{t}h_{t}\right) C_{t}\right] dt, \nonumber\\
II &=&\mathbb{E}\left[ \left( f_{t}-\hat{f}_{t}\right) dC_{t}\right] =\mathbb{E}\left[ \left( f_{t}-\hat{f}_{t}\right) \alpha \left( t\right)
h_{t}C_{t}\right] dt, \nonumber \\
III &=&\mathbb{E}\left[ \left( df_{t}-d\hat{f}_{t}\right) dC_{t}\right]  =\mathbb{E}\left[ -\gamma _{t}\alpha \left( t\right) C_{t}\right] dt.
\end{eqnarray*}

As $\alpha $ was arbitrary, we deduce that $I\equiv 0$ and $II+III\equiv 0$.
The first of these implies that $ \mathbb{E}\left[ \left( \mathscr{L}f_{t}-\beta _{t}-\gamma _{t}h_{t}\right) C_{t}\right] =0$, and by taking conditional expectation under the expectation, and noting that 
$\beta ,\gamma $ and $C$ are adapted,  we get $\beta _{t}=\widehat{\mathscr{L}f_{t}}-\gamma _{t}\hat{h}_{t}$.

Similarly we have $\mathbb{E}\left[ \left( f_{t}-\hat{f}_{t}\right) \alpha \left( t\right)
h_{t}C_{t}\right] -\mathbb{E}\left[ \gamma _{t}\alpha \left( t\right) C_{t}\right] =0$,
and by the same procedure we deduce that $\gamma _{t}=\widehat{f_{t}h_{t}}-\hat{f}_{t}\hat{h}_{t}$.
\end{demonstration}

\subsubsection{The Conditional Density Form}

The filtering problem admits a \textit{conditional density} $\rho_t $ if we have
\[
\widehat{f}_t ( \omega) =\int f(x) \rho_t (x,\omega ) \, dx,
\]
in which case the Kushner-Stratonovich equation becomes 
\begin{eqnarray}
d\rho_t (x ,\omega)  =\mathscr{L}^{\prime } \rho_t (x ,\omega) \,  dt + \left\{ h (x)-\int h (x' ) \rho _t (x',\omega )dx' \right\} \, \rho_t (x ,\omega)\,  dI_{t}.
\label{eq:KS}
\end{eqnarray}

We remark that $\rho_t$ is a probability density function valued stochastic process, and that (\ref{eq:KS}) is in \textit{nonlinear} in $\rho $! 
 If we average over all outputs we get an average density $\bar{ \rho}_t$ which satisfies the Fokker-Planck (Kolmogorov forward) equation
\begin{eqnarray}
\frac{d}{dt} \bar{ \rho }_{t}=\mathscr{L}^\star  \bar{\rho }_t,
\end{eqnarray}
where the dual of the diffusion generator is $\mathscr{L}^\star g = ( vg )^\prime + \frac{1}{2} ( \sigma^2 g )^{\prime \prime} $.

An equivalent formulation is the \textit{Zakai equation} for an \emph{unnormalized} stochastic density function $\xi _{t}$,
and is the \textit{linear} equation
\begin{eqnarray}
d\xi _{t}=\mathscr{L}^{\prime }\xi _{t}\,dt+\left( \sigma \xi
_{t}\right) ^{\prime }\,dY_{t}.
\end{eqnarray}
We then have $\rho _{t}(x, \omega ) =\xi _{t}(x, \omega ) /\int \xi _{t} (x', \omega ) dx'$.

\subsubsection{The Kalman-Bucy Filter}

As a special case, take 
\begin{eqnarray*}
dX_{t} &=&AX_{t}\,dt +Bu(t) \, dt +dW_{t} \\
dY_{t} &=&HX_{t}\,dt+dV_{t}
\end{eqnarray*}
so that $dI_{t}=dY_{t}-H\hat{X}_{t}dt$.  We have 
\begin{eqnarray*}
d\widehat{X}_{t} &=&A\widehat{X}_{t}\,dt + Bu(t) \, dt +H\left( \widehat{X_{t}^{2}}-%
\widehat{X_{t}}\widehat{X_{t}}\right) \,dI_{t}, \\ 
d\widehat{X_{t}^{2}} &=& (2A\widehat{X_{t}^{2}} +2Bu(t) \hat{x}_t +Q) dt+H\left( \widehat{X_{t}^{3}}-%
\widehat{X_{t}^{2}}\widehat{X_{t}}\right) dI_{t} .
\end{eqnarray*}

From Gaussianity,
we have $\widehat{\big( X_{t}-\widehat{X_{t}}\big) ^{3}}=\widehat{X_{t}^{3}}-3\widehat{X}_{t}\widehat{X_{t}^{2}} +2\widehat{X_{t}}^{2}=0$,
and this allows us to reduce the order of the problem so that we only have to contend with moments and variances of the conditional distribution.

Let $P_{t}=\widehat{X_{t}^{2}}-\widehat{X_{t}}^{2}$, then we are lead to the Kalman-Bucy filter
\begin{eqnarray*}
d\hat{X}_{t}&=&A\widehat{X}_{t}\,dt +Bu(t) \, dt+HP_{t}\,dI_t \\
&\equiv & (A-H^{2}P_{t})\widehat{X}_{t}\,dt +Bu(t) \, dt +HP_{t}\,dY_{t}, 
\\ 
\frac{d}{dt}P_{t}&=&2AP_{t}+Q-H^{2}P_{t}^{2}.
\end{eqnarray*}

\subsection{Physical Motivation Of Quantum Filters}
The theory of quantum filtering was developed by V.P. Belavkin \cite{Belavkin1}, and represents the continuation of the work of Kalman, Stratonovich, Kushner, Zakai, etc.
Partial information about the state of high $Q$ cavity modes is obtained by measuring Rydberg atoms
that are passed one-by-one through the cavity, see Figure \ref{fig:LKB_Photon_Box}.

\begin{figure}[h]
	\centering
		\includegraphics[width=0.50\textwidth]{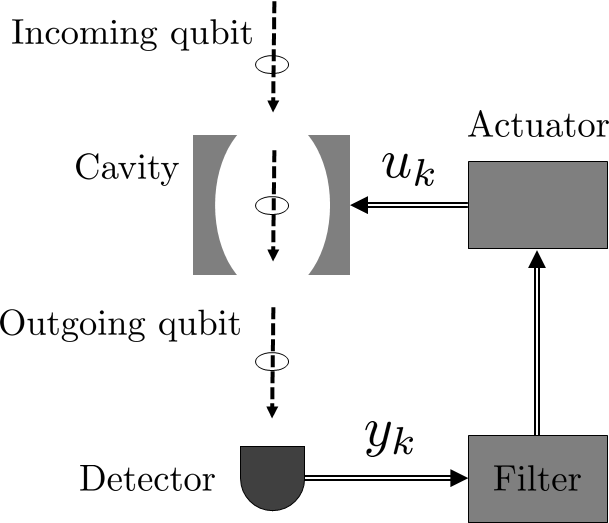}
	\caption{A schematic of the LKB photon box experiment: qubits (Rydberg atoms) are passed 
	through a cavity one by one. At any one time there will be at most one qubit inside the cavity,
	and we measure the outgoing qubits one-by-one. The measurement bit $y_k$ is sent into a filter
	which estimates the state of the cavity mode, and then an instruction is sent to the actuator
	so as to control the mode.}
	\label{fig:LKB_Photon_Box}
\end{figure}

The measurement results can be used to apply a feedback action on the cavity mode.
We consider a quantum mechanical system which is probed by a two-level atom
(qubit). The qubit is in input state $|\downarrow \rangle $ initially. The
unitary interaction between the cavity mode and the probe qubit leads to a change of state
in the Schr\"{o}dinger picture:
\begin{eqnarray}
|\psi \rangle \otimes |\downarrow \rangle \longrightarrow U\,|\psi \rangle
\otimes |\downarrow \rangle .
\end{eqnarray}
We take the interaction time $\tau $ to be very small and assume that the
unitary has the form
\begin{eqnarray*}
U &=&\exp \left\{ \sqrt{\tau }L\otimes \sigma ^{\ast }-\sqrt{\tau }L^{\ast
}\otimes \sigma -i\tau H\otimes I_{2}\right\}  \\
&\simeq &1+\sqrt{\tau }L\otimes \sigma ^{\ast }-\sqrt{\tau }L^{\ast }\otimes
\sigma   -\tau (\frac{1}{2}L^{\ast }L+iH)\otimes I_{2}+\cdots .
\end{eqnarray*}

We now measure the spin $\sigma_{x}$ of the qubit and record the eigenvalues $\eta =\pm 1$ corresponding to eigenvectors
\begin{eqnarray*}
|+\rangle =\frac{1}{\sqrt{2}}|\downarrow \rangle +\frac{1}{\sqrt{2}}%
|\uparrow \rangle ,\qquad |-\rangle =\frac{1}{\sqrt{2}}|\downarrow \rangle -%
\frac{1}{\sqrt{2}}|\uparrow \rangle .
\end{eqnarray*}
The probabilities for detecting $\eta =\pm 1$ are
\begin{eqnarray*}
p_{\pm }=\frac{1}{2} \pm \frac{1}{2} \sqrt{\tau }\langle \psi |L+L^{\ast }|\psi
\rangle +\cdots .
\end{eqnarray*}
After measurement, the system state becomes (up to normalisation!)
\begin{eqnarray*}
|\psi _{\eta }\rangle \propto |\psi \rangle +\sqrt{\tau }L|\psi \rangle
\,\eta -\tau (\frac{1}{2}L^{\ast }L+iH)|\psi \rangle +\cdots .
\end{eqnarray*}
For more details on how this is interpreted as a discrete-time quantum Kalman filter, see \cite{Rouchon}.

We may now proceed to take a continuous time limit where we have (due to a central limit effect)
\begin{eqnarray*}
\tau \hookrightarrow dt\quad \sqrt{\tau }\eta \hookrightarrow dY(t)
\end{eqnarray*}
where $Y(t)$, the continuous time measurement readout, will be a diffusion process. The limit equation is 
\begin{eqnarray}
d|\chi _{t}\rangle =L|\chi _{t}\rangle \,dY_t -(\frac{1}{2}L^{\ast }L+iH)|\chi
_{t}\rangle \,dt ,
\label{eq:Belavkin_Zakai}
\end{eqnarray}
and we refer to this as the \textit{Belavkin-Zakai equation} as it plays the same role as the Zakai equation
in Belavkin's theory of quantum filtering. In general $| \chi_t \rangle$ is not normalized, but it is easy to
obtain the equation for the normalized state $|\psi _{t}\rangle =|\chi _{t}\rangle /\left\|
\chi _{t}\right\| $, we find the \textit{Stochastic Schr\"{o}dinger equation}
\begin{eqnarray}
d|\psi _{t}\rangle =
iH|\psi _{t}\rangle \,dt 
- \frac{1}{2} (L -\lambda _{t})^\ast (L - \lambda _{t}) |\psi _{t}\rangle
\,dt  +
(L-\lambda _{t})|\psi _{t}\rangle \,dI (t) .
\end{eqnarray}
where
\begin{eqnarray}
\lambda _{t} \triangleq \langle \psi _{t}|L+L^{\ast }|\psi _{t}\rangle , \quad dI (t) \triangleq dY_{t}-\lambda _{t}dt .
\end{eqnarray}
Mathematically $I(t)$ has the statistics of a Wiener process, and its increment $dI(t)$ is the difference between what we observe, $dY(t)$,
and what we would expect to get $\langle \psi _{t}|L+L^{\ast }|\psi _{t}\rangle \, dt$.

It is convenient to frame this in the Heisenberg picture. Let $| \psi_0 \rangle$ be the initial state, then setting 
$\langle \psi_0 |\widehat{X}_{t}| \psi_0 \rangle \triangleq \langle \psi _{t}|X|\psi _{t}\rangle $,
we have
\begin{eqnarray*}
d\widehat{X}_t &=& \widehat{\mathscr{L}X}_t \, dt
+\left\{ \widehat{XL}_t-\widehat{X}_t \widehat{L}_t \right\} dI (t) 
+\left\{ \widehat{L^\ast X}_{t}
-\widehat{L^\ast }_{t}\widehat{X}_{t}\right\} dI (t) \\
&=& \widehat{\mathscr{L}X}_{t} \, dt 
+\left\{ \widehat{XL+L^\ast X}_{t}- \widehat{X}_{t}(\widehat{L+L^\ast}_t ) \right\} dI (t) ,
\end{eqnarray*}
where the dynamical part involves the GKS-Lindblad generator
\[ 
\mathscr{L} X\triangleq \frac{1}{2}\left[ L^{\ast },X\right] L+\frac{1}{2}L^{\ast }\left[ X,L\right] -i\left[ X,H\right] 
\]
and the innovations may be written as
\[
dI_t = dY_t - ( \widehat{L+L^\ast}_t ) \, dt .
\]

We may set $\widehat{X_t} = \mathrm{tr} \{ \varrho_t X \}$ and obtain the \textit{Stochastic Master equation} 
\begin{eqnarray*}
d \varrho_t =  \mathscr{L}^\prime \varrho_t  \, dt
+  \left\{ L \varrho_t + \varrho_t L^\ast - \mathrm{tr} \{ \varrho_t (L+L^\ast )\varrho_t  \right\} \, dI_t .
\end{eqnarray*}
where we have the dual generator
$$\mathscr{L}^\prime \varrho  =   L \varrho  L^\ast  -\frac{1}{2} \varrho  L^\ast L - \frac{1}{2} L^\ast L \varrho  +i[ \varrho
 , H ] .$$
This is the quantum version of the conditional probability density form of the Kushner-Stratonovich equation.
Averaging over the output yields the \textit{Master equation}
\[
\frac{d}{dt} \bar{\varrho}_t =   L \bar{\varrho}_t L^\ast  -\frac{1}{2} \bar{\varrho}_t L^\ast L - \frac{1}{2} L^\ast L \bar{\varrho}_t +i[ \bar{\varrho}_t , H ] .
\]
This is the analogue of the Fokker-Planck equation.

\subsection{Quantum Probability}
We wish to present quantum filtering as a natural extension of the classical theory. To this end, it is extremely helpful to extend the standard theory
of probability (the Kolmogorov framework) to that of quantum probability. here we follow closely the lucid exposition given by Maassen \cite{Maassen88}.

\begin{definition}
A \textbf{Quantum Probability space}, or \textbf{QP space}, is a pair $( \mathfrak{A}, \mathbb{E} )$ consisting of
a von Neumann algebra $\mathfrak{A}$ with normal state $\mathbb{E}$.
\end{definition}

The use of the term normal here refers to continuity in the normal topology and has nothing to do with gaussianity.
The algebra of bounded functions $L^{\infty }(\Omega ,\mathscr{F},\mathbf{P})$ on a classical probability space is a commutative example,
with $\mathbb{E} [f] = \int_\Omega f(\omega ) \, \mathbf{P} [d\omega ]$. At its heart, quantum probability amounts to dropping the commutativity
assumption and this leads to the situation in quantum theory where we now have a closed algebra of operators on a Hilbert space $\mathfrak{h}$, with
the normal states  taking the form $\mathbb{E} [X] \equiv \mathrm{tr} \{ \varrho X \}$, for some density matrix $\varrho$.

\begin{definition} 
An \textbf{observable} in the QP space $( \mathfrak{A} , \mathbb{E})$ is a self-adjoint operator $X$ with $\mathfrak{A}$-valued 
spectral measure $\Pi$. That is,
\[
f(X) \equiv \int_{\mathbb{R}} f(x) \, \Pi [dx] ,
\]  
where $\Pi [dx]$ is a projection-valued measure with $\Pi [A] \in \mathfrak{A}$ the event that $X$ takes a value in (Borel) set $A$.
\end{definition}

\begin{itemize}

\item The probability distribution of $X$ in the state $\mathbb{E} [\cdot ] \equiv \mathrm{tr} \{ \varrho \cdot \}$ is
\[
\mathbb{K} [dx] =\mathrm{tr} \{ \varrho \, \Pi [ dx ]\}.
\]

\item Quantum probabilists would say that the observable is a CP embedding $j$ of the commutative QP space $\mathfrak{A} = 
L^\infty ( \mathbb{R} , \mathbb{K} )$  into $\mathfrak{A}$, that is $:f \mapsto j(f) \triangleq f(X)$.

\item More generally we define a \textbf{quantum random variable} to be a CP embedding of a \textit{non-commutative} QP space $\mathfrak{B}$
into $\mathfrak{A}$.

\end{itemize}

\bigskip

A quantum probability space for a qubit is therefore the algebra of $2 \times 2$ matrices with a suitable density matrix as state.It is an important, though inconvenient, fact that the position and momentum operators in quantum mechanics are unbounded operators. Technically they do not themselves belong to the von Neumann algebra they generate. The standard approach to generating a von Neumann algebra from a set of observables is to take the closure of the algebra generated by the corresponding projections: this is the analogue of generating a $\sigma$-algebra of subsets given a set of random variables in classical probability (measure) theory.

In general, for von Neumann sub-algebras $\mathfrak{B}_k$ of a von Neumann algebra $\mathfrak{A}$, we will write $ \mathfrak{B}_1 \vee \mathfrak{B}_2$ for the smallest
von Neumann sub-algebra of $\mathfrak{A}$ containing both algebras. For $X_1, \cdots , X_n$ a collection of observables, we denote the
von Neumann algebra they generate by $\mathrm{vN} \{ X_1, \cdots , X_n\}$.

\begin{definition}
A \textbf{conditional expectation} is a CP map from a QP space $( \mathfrak{A}, \mathbb{E})$ onto
a sub-QP space $(\mathfrak{Y} , \mathbb{E} \mid_{\mathfrak{Y}})$ such that $\mathbb{E}[\widehat{X}] = \mathbb{E} [X]$ which is a \textit{projection}, that is 
\begin{eqnarray}
\widehat{\hat{X}} =\hat{X} .
\end{eqnarray}
\end{definition}

There is some good news and some bad news. The good news first.

\begin{proposition}
Suppose that we have a quantum conditional expectation, say  $\hat{X}=\mathbb{E} [X \mid \mathfrak{Y} ] $, generated by some quantum random variable, 
then it will satisfy
\begin{enumerate}
\item $\mathbb{E}[Y_{1}\,\hat{X}\,Y_{2}]=\mathbb{E}\left[ Y_{1}\,X\,Y_{2}\right] $, for all $Y_{1}$ and $Y_{2}$ be in $\mathfrak{Y}$;

\item Least squares property: 

$\min_{Y\in \mathfrak{Y}} \mathbb{E}[(X-Y)^{\ast }(X-Y)]$ is given by
\begin{eqnarray*}
 \mathbb{E}[(X-\hat{X})^{\ast }(X- \hat{X} )]
=\mathbb{E}[\widehat{X^{\ast }X}]-\mathbb{E}[\hat{X}^{\ast }\hat{X}].
\end{eqnarray*}
\end{enumerate}
\end{proposition}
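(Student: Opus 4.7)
The plan hinges on one extra property that goes beyond what is written in the definition: the $\mathfrak{Y}$-bimodule property, namely $\widehat{Y_{1} X Y_{2}} = Y_{1} \hat{X} Y_{2}$ whenever $Y_{1}, Y_{2} \in \mathfrak{Y}$. This is not part of the stated definition, but it follows from Tomiyama's theorem: a CP idempotent from a von Neumann algebra onto a sub-von Neumann algebra is automatically completely contractive and is a bimodule map over its range. Since the definition gives us CP plus idempotency onto $\mathfrak{Y}$, we may invoke this result and treat the bimodule identity as established; this will be the workhorse for both parts. The one thing I would flag is that the positivity and normality of $\mathbb{E}$ are needed implicitly, so I would remark that $\mathbb{E}[A^{*}A] \geq 0$ and $\mathbb{E}[A^{*}] = \overline{\mathbb{E}[A]}$.

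For (1), I would argue in one line: applying $\mathbb{E}$ to the bimodule identity gives
\begin{eqnarray*}
\mathbb{E}[Y_{1} X Y_{2}] \;=\; \mathbb{E}\bigl[\widehat{Y_{1} X Y_{2}}\bigr] \;=\; \mathbb{E}[Y_{1}\, \hat{X}\, Y_{2}],
\end{eqnarray*}
where the first equality is the trace-preserving property $\mathbb{E}[\hat{Z}] = \mathbb{E}[Z]$ applied to $Z = Y_{1} X Y_{2}$, and the second is the bimodule identity. No further calculation is required.

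For (2), I would exploit (1) to handle the cross terms. Expand, for any $Y \in \mathfrak{Y}$,
\begin{eqnarray*}
(X-Y)^{*}(X-Y) \;=\; X^{*}X - X^{*}Y - Y^{*}X + Y^{*}Y,
\end{eqnarray*}
and apply $\mathbb{E}$. Using part (1) with the choices $(Y_{1},Y_{2}) = (1,Y)$ and $(Y^{*},1)$ replaces $X$ by $\hat{X}$ in both cross terms, giving $\mathbb{E}[X^{*}Y] = \mathbb{E}[\hat{X}^{*}Y]$ and $\mathbb{E}[Y^{*}X] = \mathbb{E}[Y^{*}\hat{X}]$. Adding and subtracting $\mathbb{E}[\hat{X}^{*}\hat{X}]$ and completing the square then yields
\begin{eqnarray*}
\mathbb{E}[(X-Y)^{*}(X-Y)] \;=\; \bigl(\mathbb{E}[X^{*}X] - \mathbb{E}[\hat{X}^{*}\hat{X}]\bigr) + \mathbb{E}[(\hat{X}-Y)^{*}(\hat{X}-Y)].
\end{eqnarray*}
The last term is nonnegative by positivity of the state $\mathbb{E}$, and vanishes precisely at $Y = \hat{X}$, which proves both the minimization and the claimed value, after one final use of the trace-preserving property $\mathbb{E}[X^{*}X] = \mathbb{E}[\widehat{X^{*}X}]$.

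The only genuine obstacle is justifying the bimodule property from the bare definition given in the text. If the authors prefer to avoid citing Tomiyama's theorem, an alternative route is to \emph{assume} the bimodule property as part of the definition (as is sometimes done), or to derive it directly from the Cauchy--Schwarz inequality for CP projections: for $Y \in \mathfrak{Y}$, one shows $\widehat{YX} - Y\hat{X}$ has zero $\mathbb{E}$-variance, forcing it to vanish in the GNS representation. Either path introduces a small technicality, but once the bimodule identity is in hand, parts (1) and (2) are essentially one-line consequences.
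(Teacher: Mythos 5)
Your argument is correct, but note that the paper itself offers no proof of this proposition --- it is stated bare, with the reader directed to Maassen's exposition --- so there is no in-paper argument to compare against; what you have supplied is the standard one. Your identification of the real issue is exactly right: the paper's definition (CP, state-preserving, idempotent onto $\mathfrak{Y}$) does not explicitly contain the $\mathfrak{Y}$-bimodule property $\widehat{Y_1 X Y_2} = Y_1 \hat{X} Y_2$, and without it neither part follows. Your two routes to it both work, with one small touch-up each: Tomiyama's theorem is stated for \emph{norm-one} projections, so you should first observe that idempotency onto $\mathfrak{Y}$ forces the map to fix $\mathfrak{Y}$ pointwise (hence to be unital), and a unital CP map is contractive; alternatively, the same observation combined with the Schwarz inequality $\widehat{Y^\ast Y} \geq \hat{Y}^\ast \hat{Y}$ puts all of $\mathfrak{Y}$ in the multiplicative domain, which is your Cauchy--Schwarz route made precise. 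Two further minor points: in the cross terms of part (2) you silently use $\widehat{X^\ast} = \hat{X}^\ast$, which holds because a CP map is $\ast$-preserving and is worth a word; and the claim that $\mathbb{E}[(\hat{X}-Y)^\ast(\hat{X}-Y)]$ vanishes \emph{precisely} at $Y=\hat{X}$ requires faithfulness of the state --- without it you only get that $\hat{X}$ attains the minimum, which is all the proposition asserts anyway. The completion-of-the-square identity and the final evaluation of the minimum via $\mathbb{E}[X^\ast X]=\mathbb{E}[\widehat{X^\ast X}]$ are both correct.
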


Now for the bad news: quantum conditional expectations do not always exist. See \cite{Maassen88} for more discussion.


\subsection{Non-Demolition Measurement and Estimation}
Let $(\mathfrak{A} , \mathbb{E} )$ be the QP model of interest,
and suppose $\mathfrak{Y}$ be the von Neumann algebra generated by the observables we measure in an experiment.
We recall the definition of the commutant of an algebra.

\begin{definition}
The \textbf{commutant} of a von Neumann sub-algebra $\mathfrak{Y}$ of a von Neumann algebra $\mathfrak{A}$ is
\begin{eqnarray}
\mathfrak{Y}^\prime \triangleq \{ A \in \mathfrak{A} : [A,Y]=0, \, \forall \, Y \in \mathfrak{Y} \}.
\end{eqnarray}
\end{definition}

The \textbf{Non-Demolition Principle} states that all the measured observables in an experiment should be \emph{compatible}.
That is, $\mathfrak{Y}$ should be \textit{commutative}, and that an observable $X$ is jointly measurable (and estimable) with the measured observables if and only if it commutes with them
($X \in \mathfrak{Y}^\prime $).

A key point is that we can always define conditional expectation of operators in $\mathfrak{Y}^\prime$ onto $\mathfrak{Y}$, see Figure \ref{fig:Venn_Neumann}. On the left we have $\mathfrak{Y}$ which is a commutative
subalgebra of $\mathfrak{A}$. The commutant $\mathfrak{Y}^\prime$ contains $\mathfrak{Y}$, but is generally noncommutative.

\begin{figure}[h]
	\centering
		\includegraphics[width=1.0\textwidth]{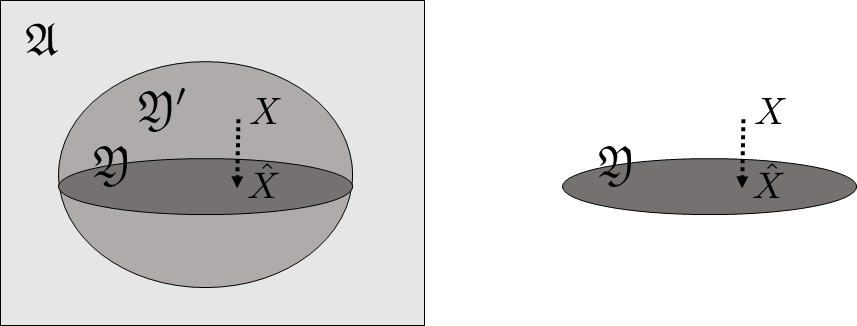}
		\caption{A \lq\lq Venn Neumann\rq\rq diagram!}
	\label{fig:Venn_Neumann}
\end{figure}

The construction is actually straightforward. Suppose that $X \in \mathfrak{Y}^\prime$ then we note that $ \mathfrak{Y} \vee \mathrm{vN} \{X \}$ is a commutative von Neumann algebra:
this is shown on the right side of Figure \ref{fig:Venn_Neumann} where we have stripped away the non-commuting parts of the rest of the universe. However, conditional expectations always (uniquely!)
exist in the classical world, and because $ \mathfrak{Y} \vee \mathrm{vN} \{X \}$ is commutative and so the corresponding QP model is isomorphic to a classical probability model.
The argument in fact leads to a unique value for $\hat X$.

It is however worth mentioning that, despite the simplicity of the above argument, the construction is non-trivial. For instance, we could take a pair of observables $X_1$ and $X_2$ from the
commutant $\mathfrak{Y}^\prime$ and deduce the existence and values of $\hat{X}_1$ and $\hat{X}_2$ - however the commutant is not required to be a commutative algebra, and so $\mathrm{vN} \{ X_1 , X_2 \}$ need not be commutative either.


\subsection{Quantum Input-Output Models}
This is sometimes referred to as the $SLH$ formalism and is a synthesis of the quantum stochastic calculus developed by Hudson-Parthasarathy in 1984, \cite{HP84,Par92} and the quantum input-output
theory developed by Gardiner and Collett in 1985, \cite{Gardiner_Collett,Gardiner_Zoller}. We work in the setting of a Hilbert space of the generic form 
\begin{eqnarray}
\mathfrak{H}=\mathfrak{h}_{0}\otimes \Gamma \left( \mathfrak{K}\otimes
L^{2}[0,\infty )\right)
\end{eqnarray}
where $\mathfrak{h}_{0}$ is a fixed Hilbert space, called the \emph{initial
space}, and $\mathfrak{K}$ is a fixed Hilbert space called the \emph{internal space} or \textit{multiplicity space}. 
In the case of $n$ bosonic input processes we fix a label set $\mathsf{k}=\left\{ 1,\cdots ,n\right\}$ and take the multiplicity space to be $\mathfrak{K}=\mathbb{C}^{\mathsf{k}}
\equiv \mathbb{C}^n$. For each $t >0$ we have tensor product decomposition
$\mathfrak{H} \cong \mathfrak{H}_{\left[ 0,t\right] }\otimes \mathfrak{H}_{\left( t,\infty \right) }$,
where $\mathfrak{H}_{\left[ 0,t\right] }=\mathfrak{h}_{0}\otimes \Gamma
\left( \mathfrak{K}\otimes L^{2}[0,t)\right) $ and $\mathfrak{H}_{\left(
t,\infty \right) }=\Gamma \left( \mathfrak{K}\otimes L^{2}(t,\infty )\right) 
$. We shall write $\mathfrak{A}_{t]}$ for the von Neumann algebra of bounded operators on $%
\mathfrak{H}$ that act trivially on the future component $\mathfrak{H}%
_{\left( t,\infty \right) }$, that is $\mathfrak{A}_{t]} \equiv \mathscr{B} ( \mathfrak{H}_{\left[ 0,t\right] }) $.

The general form of the \emph{constant} operator-coefficient quantum stochastic
differential equation for an adapted unitary process $U$ is 
\begin{eqnarray}
dU\left( t\right) &=& \bigg\{ -\left( \frac{1}{2}L_{\mathsf{k}}^{\ast }L_{%
\mathsf{k}}+iH\right) dt+\sum_{j\in \mathsf{k}}L_{j}dB_{j}\left( t\right)
^{\ast }  \nonumber \\
&&-\sum_{j,k\in \mathsf{k}} L_j^\ast S_{jk} dB_{k}\left( t\right) +\sum_{j,k\in 
\mathsf{k}}(S_{jk}-\delta _{jk})d\Lambda _{jk}\left( t\right) \bigg\} U\left( t\right)
\label{eq:Ito_QSDE}
\end{eqnarray}
where the $S_{jk},L_{j}$ and $H$ are operators on the initial Hilbert space
which we collect together as
\begin{eqnarray}
S_{\mathsf{kk}}=\left[ 
\begin{array}{ccc}
S_{11} & \cdots & S_{1n} \\ 
\vdots & \ddots & \vdots \\ 
S_{n1} & \cdots & S_{nn}
\end{array}
\right] , \quad L_{\mathsf{k}}=\left[ 
\begin{array}{c}
L_{1} \\ 
\vdots \\ 
L_{n}
\end{array}
\right] .
\end{eqnarray}

The necessary and sufficient conditions for the process $U$ to be unitary are that 
$S_{\mathsf{kk}}=\left[ S_{jk}\right] _{j,k\in \mathsf{k}}$ is unitary (i.e., $\sum_k S_{kj}^\ast S_{kl} = \sum_k S_{jk}S_{lk}^\ast = \delta_{jl} \, I$)
and $H$ self-adjoint. (We use the convention that $L_{\mathsf{k}}=\left[
L_{k}\right] _{k\in \mathsf{k}}$ and that \ $L_{\mathsf{k}}^{\ast }L_{%
\mathsf{k}}=\sum_{k\in \mathsf{k}}L_{k}^{\ast }L_{k}$.) 
The triple $G \sim \left( S,L,H\right) $ are termed the \textit{Hudson-Parthasarathy parameters}, or informally as just the $SLH$-\textit{coefficients}.
For more details of the $SLH$ formalism see  \cite{GouJam09a,GouJam09b,CKS}.

We will sketch the situation where we have an open quantum system $G \sim \left( S,L,H\right) $ driven by boson fields as a block component as in Figure \ref{fig:eye}.
To complete our theory, we need to say how the outputs can be incorporated into the framework. The output fields are, in fact, defined by
\begin{eqnarray}
B^{\text{out}}_k (t) \triangleq U ( t)^\ast \left[ I\otimes B_k (t) \right] U\left( t\right) ,
\label{eq:output}
\end{eqnarray}
and a simple application of the quantum It\={o} calculus yields
\begin{eqnarray}
dB_{j}^{\text{out}}\left( t\right) =\sum_{k}j_{t}\left( S_{jk}\right)
\,dB_{k}\left( t\right) +j_{t}\left( L_j\right) \,dt .
\end{eqnarray}
The reasoning behind (\ref{eq:output}) is given by Gardiner and Collett \cite{Gardiner_Collett}, see also 
\cite{Gardiner_Zoller}.

\begin{figure}
	\centering
		\includegraphics[width=0.60\textwidth]{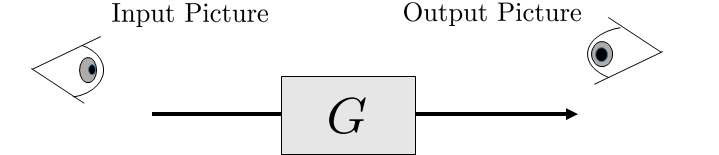}
		\caption{SLH block component: we have quantum inputs coming in from the left and quantum outputs going out on the right, and these are described by arrows
		(they may generally carry a multiplicity) while the system is denoted by a block and the $SLH$-coefficients are $G \sim \left( S,L,H\right) $.}
	\label{fig:eye}
\end{figure}

Let $X$ be an initial space operator, then we consider its time evolution which we write as $j_{t}\left(
X\right) \triangleq U\left( t\right) ^{\ast }\left[ X\otimes I\right]
U\left( t\right) $.
From the quantum It\={o} calculus, we obtain the Heisenberg-Langevin equation
\begin{eqnarray}
dj_{t}\left( X\right) &=& j_{t}\left( \mathscr{L}X\right)
dt+\sum_{i}j_{t}\left( \mathscr{L}_{i0}X\right) dB_{i}^{\ast }\left( t\right) \nonumber \\
&&
+\sum_{i}j_{t}\left( \mathscr{L}_{0i}X\right) dB_{i}\left( t\right)
+\sum_{j,k}j_{t}\left( \mathscr{L}_{jk}X\right) d\Lambda _{jk}\left(
t\right) ;
\label{eq:heisenberg}
\end{eqnarray}
where 
\begin{eqnarray}
\mathscr{L}X &=&\frac{1}{2}\sum_{i}L_{i}^{\ast }\left[ X,L_{i}\right] +\frac{%
1}{2}\sum_{i}\left[ L_{i}^{\ast },X\right] L_{i}-i\left[ X,H\right] , \, 
\text{(the Lindbladian!)},\nonumber \\
\mathscr{L}_{i0} X &=& \sum_j S_{ji}^\ast [X,L_j ], \nonumber \\
\mathscr{L}_{0i} X &= & \sum_k [L_k^\ast , X ] S_{ki},\nonumber \\
\mathscr{L}_{ik} X &=& \sum_j S_{ji}^\ast XS_{jk} - \delta_{ik} X .
\label{eq:EH}
\end{eqnarray}

We may write this as 
\begin{eqnarray*}
dj_{t}\left( X\right) =\sum_{\alpha =0}^{n}\sum_{\beta =0}^{n}j_{t}\left( %
\mathscr{L}_{\alpha \beta }X\right) \, d\Lambda ^{\alpha \beta }\left(
t\right)
\end{eqnarray*}
where $\mathscr{L}_{00}=\mathscr{L}$ and $\Lambda ^{00}\left( t\right) =t$, $%
\Lambda ^{j0}\left( t\right) =B_{j}\left( t\right) ^{\ast }$ and $\Lambda
^{0k}\left( t\right) =B_{k}\left( t\right) $.

We now make a crucial observation. The unitary process $U$ is responsible for coupling the system to the boson field environment
and does so in a causal way. In particular, the algebra $\mathfrak{A}_{t]}$ consists of precisely the initial system together with the
component of the environment that has had a chance to interact with it up to time $t$. The unitary process is also responsible for 
transferring us from the input picture to the output picture. At the moment, this distinction is fairly straightforward but it will become
more involved when we allow the $SLH$-coefficients to be adapted process rather than just fixed operators on the initial space.

The output field may, of course, then be measured. In a homodyne measurement, we may
measure the quadrature process $Y_k (t) = B_k^{\text{out}} (t) +B^{\text{out} \ast}_k(t)$. Note that in this case,
\begin{eqnarray}
dY_k (t) = \sum_{k}j_{t}\left( S_{jk}\right) \,dB_{k}\left( t\right) +\sum_{k}j_{t}\left( S_{jk}^\ast \right) \,dB_{k}^\ast\left( t\right) 
+j_{t}\left( L_j+ L_j^\ast \right) \,dt .
\label{eq:standard_quadrature_output}
\end{eqnarray}
We note that $Y_k (t)$ is obtained from $Z_k (t) = B_k (t)^\ast + B_k (t)$ by unitary conjugation by $U(t)$. As the $Z_k$ is a self-commuting
process, we have that so too will $Y_k $ - an essential requirement for non-demolition measurement. the same would be true if we wanted to 
count photons at the output: we would set $Z_k (t) = \Lambda_{kk} (t)$, which self-commuting for different $t$, and take $Y_k$ to the 
equivalent process obtained by transferring from the input to the output picture.

\begin{definition}[Measured Process]
Let $Z_t$ be an adapted self-commuting process on the Fock space. The corresponding \textbf{measured process} at output is
\[
Y_t \triangleq U_t^\ast (1 \otimes Z_t ) U_t ,
\]
and its measurement algebra is defined to be
\[
\mathfrak{Y}_t = \mathrm{vN} \{ Y_s : 0 \leq s \leq t \} ,
\]
\end{definition}

In general, we could have several adapted processes $Z_k$ on the Fock space - all commuting with each other - and measure the corresponding output picture processes
$Y_k (t)$. We remark that $\mathfrak{Y}_t$ will then be a \emph{commutative} von Neumann algebra in accordance with the non-demolition principle. In fact, 
we have a \textbf{filtration} of von Neumann algebras $\{ \mathfrak{Y}_t : t \geq 0 \}$ with
$\mathfrak{Y}_t \subset \mathfrak{Y}_s$ whenever $ t<s$.

We also have the following dynamical non-demolition principle: $j_s (X) \in \mathfrak{Y}_t^\prime$ for each $s \geq t$ and for all operators, $X$, on the initial space.
To see this, note that, for $s \geq t$, $Y_t =U_t^\ast (1 \otimes Z_t ) U_t = U_s^\ast (1 \otimes Z_t ) U_s $ while  
$j_s (X) =  U_s^\ast (X \otimes 1 ) U_s$, therefore  $[ j_s (X) , Y_t ] =  U_s^\ast [ X \otimes 1 , 1 \otimes Z_t  ] U_s $.
It follows that present Heisenberg picture value $j_t (X)$ is compatible with the measurement algebra $\mathfrak{Y}_t$, for each $t$.
Therefore, we may estimate (filter) $j_t (X)$ based on the measurements up to current time. The same is also true for future values $j_s (X)$, $s >t$,
which we may then predict using the measurements up to current time.

As we have seen, the conditional expectation of an observable $X$ in the Heisenberg picture onto the measurement output algebra $\mathfrak{Y}_t$ then exists,
and we call this the \emph{filtered estimate} for $X$ based on the measurements. We denote this as
\begin{eqnarray}
\pi_t (X) \triangleq \mathbb{E} [ j_t ( X  ) \mid \mathfrak{Y}_t ] .
\label{eq:pi}
\end{eqnarray}

\subsection{The Belavkin-Kushner-Stratonovich Equation}
We now follow the notation of \cite{BvH}, see also \cite{BouvanHJam07}
As a special example, let us take $Z (t) = e^{i\theta (t)} B(t) + e^{-i \theta (t)} B(t)^\ast$.
The phase $\theta $ may be time dependent, and in principle even an adapted
process though we assume it to be deterministic for the time being.
The measured field is $Y^{\theta }\left( t\right) $ where 
\begin{eqnarray*}
dY^{\theta }\left( t\right) =e^{i\theta \left( t\right) }dB_{\text{out}%
}\left( t\right) +e^{-i\theta \left( t\right) }dB_{\text{out}}\left(
t\right) ^{\ast }.
\end{eqnarray*}
We denote by $\mathfrak{Y}_{t]}^{\theta }$ the von Neumann algebra generated by $%
Y^{\theta }\left( s\right) $ for $0\leq s\leq t$.

The filter is the conditional expectation 
\begin{eqnarray*}
\pi _{t}^{\theta }\left( X\right) =\mathbb{E}\left[ j_{t}\left( X\right) |%
\mathfrak{Y}_{t]}^{\theta }\right]
\end{eqnarray*}
for the state which is a factor state with the field in the Fock vacuum.

\begin{lemma}
The filter satisfies the Belavkin-Kushner-Stratonovich equation. 
\begin{eqnarray}
d\pi _{t}^{\theta }\left( X\right) =\pi _{t}^{\theta }\left( \mathscr{L}%
X\right) dt+\mathscr{G}_{t}^{\theta }\left( X\right) dI (t) ^{\theta }
\label{eq:q_filter}
\end{eqnarray}
where $\mathscr{L}$ is the Lindbladian $\frac{1}{2}L^{\ast }\left[ \cdot ,L%
\right] +\frac{1}{2}\left[ L^{\ast },\cdot \right] L-i\left[ \cdot ,H\right] 
$, and
\begin{eqnarray}
\mathscr{G}_{t}^{\theta }\left( X\right) =  \pi _{t}^{\theta }\left(
e^{i\theta \left( t\right) }XL+e^{-i\theta \left( t\right) }L^{\ast
}X\right) -\pi _{t}^{\theta }\left( X\right) \pi _{t}^{\theta }\left(
e^{i\theta \left( t\right) }L+e^{-i\theta \left( t\right) }L^{\ast }\right)
\label{eq:G(X)}
\end{eqnarray}
and $I^{\theta }$ is the innovations process $dI^{\theta }\left( t\right)
=dY^{\theta }\left( t\right) -\pi _{t}^{\theta }\left( e^{i\theta \left(
t\right) }L+e^{-i\theta \left( t\right) }L^{\ast }\right) dt$.
\end{lemma}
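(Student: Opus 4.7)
The plan is to transcribe to the quantum setting the characteristic-function derivation of the classical Kushner--Stratonovich equation given in the demonstration above. First, I would combine the Heisenberg--Langevin equation (\ref{eq:heisenberg}) for $j_t(X)$, restricted to the single-noise case with $S=I$, with the associated output relation
\[
dY^\theta(t)=j_t\bigl(e^{i\theta(t)}L+e^{-i\theta(t)}L^\ast\bigr)\,dt+e^{i\theta(t)}\,dB(t)+e^{-i\theta(t)}\,dB(t)^\ast .
\]
Next, I introduce the auxiliary \emph{test} process $C_t$ defined by $dC_t=\alpha(t)\,C_t\,dY^\theta(t)$, $C_0=1$, for an arbitrary deterministic scalar function $\alpha$; since its differential lies entirely in $\mathfrak{Y}_t^\theta$, the process $C_t$ is adapted to the measurement filtration. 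In parallel with the classical demonstration I postulate the ansatz
\[
d\pi_t^\theta(X)=\beta_t\,dt+\gamma_t\,dY^\theta(t),
\]
with $\beta_t,\gamma_t$ adapted to $\mathfrak{Y}_t^\theta$. This is justified by the commutativity of $\mathfrak{Y}_t^\theta$: identifying it with a classical $L^\infty$ on which $Y^\theta$ generates the filtration, the martingale representation theorem applies.

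The defining property of the conditional expectation gives $\mathbb{E}[(j_t(X)-\pi_t^\theta(X))\,C_t]=0$ for every $t\geq 0$. Differentiating this identity with the quantum It\={o} product rule yields $I+II+III=0$, where
\begin{eqnarray*}
I&=&\mathbb{E}[(dj_t(X)-d\pi_t^\theta(X))\,C_t],\\
II&=&\mathbb{E}[(j_t(X)-\pi_t^\theta(X))\,dC_t],\\
III&=&\mathbb{E}[(dj_t(X)-d\pi_t^\theta(X))(dC_t)] ,
\end{eqnarray*}
in complete analogy with the classical decomposition. In the vacuum state all stochastic-integral contributions against $dB,dB^\ast$ have zero expectation, so only the drift terms and the Itô cross-products $dB\,dB^\ast=dt$ survive.

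Using the algebra homomorphism property of $j_t$, the dynamical non-demolition principle (every element of $\mathfrak{Y}_t^\theta$ commutes with $j_t(Z)$ for all $Z$ on the initial space), and the defining identity $\mathbb{E}[j_t(Z)\,C_t]=\mathbb{E}[\pi_t^\theta(Z)\,C_t]$, the arbitrariness of $\alpha$ splits the relation into $I\equiv 0$ and $II+III\equiv 0$. The first equation yields $\beta_t=\pi_t^\theta(\mathscr{L}X)-\gamma_t\,\pi_t^\theta(e^{i\theta}L+e^{-i\theta}L^\ast)$. The second combines the drift contribution of $II$ (which produces $\pi_t^\theta(e^{i\theta}XL+e^{-i\theta}XL^\ast)$) with the Itô cross-term $(dj_t(X))(dC_t)$ inside $III$ (which produces $e^{-i\theta}\pi_t^\theta(L^\ast X-XL^\ast)$) and with $(d\pi_t^\theta(X))(dC_t)=\alpha(t)\gamma_t C_t\,dt$ arising from $(dY^\theta)^2=dt$, yielding
\[
\gamma_t=\pi_t^\theta(e^{i\theta}XL+e^{-i\theta}L^\ast X)-\pi_t^\theta(X)\,\pi_t^\theta(e^{i\theta}L+e^{-i\theta}L^\ast)=\mathscr{G}_t^\theta(X).
\]
Substituting back and rewriting $dY^\theta=dI^\theta+\pi_t^\theta(e^{i\theta}L+e^{-i\theta}L^\ast)\,dt$ collapses the ansatz into (\ref{eq:q_filter}). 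The main obstacles are (i) rigorously justifying the martingale-representation ansatz via the identification of the commutative filtration $\mathfrak{Y}_t^\theta$ with a classical Wiener filtration, and (ii) the non-commutative bookkeeping in $III$: only the ordering $dB\,dB^\ast$ survives as $dt$ (never $dB^\ast dB$), so $(dj_t(X))(dC_t)$ produces the asymmetric contribution $e^{-i\theta}(L^\ast X-XL^\ast)$ which must combine with the drift of $II$ to reconstitute the symmetrised expression $e^{i\theta}XL+e^{-i\theta}L^\ast X$ characteristic of the quantum filter gain.
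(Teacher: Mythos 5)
Your proposal is correct and follows essentially the same route as the paper: the characteristic-function method with a test process ($C_t$ here, $y_t$ in the paper) satisfying $dC_t=\alpha(t)C_t\,dY^\theta(t)$, the ansatz $d\pi_t^\theta(X)=\beta_t\,dt+\gamma_t\,dY^\theta(t)$, the orthogonality identity differentiated into $I+II+III=0$, and extraction of the coefficients of $\alpha(t)\,dt$ and $dt$. Your noncommutative bookkeeping in $III$ — only $dB\,dB^\ast=dt$ survives, so the cross-term $e^{-i\theta}(L^\ast X-XL^\ast)$ combines with the $e^{i\theta}XL+e^{-i\theta}XL^\ast$ from $II$ to give the gain $\mathscr{G}_t^\theta(X)$ — matches the paper's computation, and your added remark on justifying the ansatz via martingale representation on the commutative algebra $\mathfrak{Y}_{t]}^\theta$ is a point the paper leaves implicit.
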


\textbf{Remark:} the filter obtained is precisely the same as for the
problem of filtering the open systems dynamics with $L$ replaced by the time
dependent coupling $e^{i\theta \left( t\right) }L$ and a continuous
measurement of the fixed quadrature $Y^{0}\left( t\right) =B_{\text{out}%
}\left( t\right) +B_{\text{out}}\left( t\right) ^{\ast }$.

\begin{proof}

(We drop the $t$ and $\theta $ identifiers for simplicity.) The proof follows
the same line of argument used to demonstrate the Kushner-Stratonovich equation by
extending the characteristic method and watching out for the noncommutative elements. The ansatz is 
made that the filter satisfies $d\pi \left( X\right) =\mathscr{H}(X)dt+%
\mathscr{G}\left( X\right) dY$ where both $\mathscr{H}(X)$ and $\mathscr{G}%
\left( X\right) $ are in $\mathfrak{Y}_{t]}^{\theta }$. We have that 
\begin{eqnarray*}
\mathbb{E}\left[ \left( \pi \left( X\right) -j\left( X\right) \right) y\right] =0  
\label{eq:ortho}
\end{eqnarray*}
for all $y\in \mathfrak{Y}_{t]}^{\theta }$. In particular, we set $y=y_{t}$
where $dy_{t}=f\left( t\right) y_{t}dY^{\theta }\left( t\right) $ with $f$
an arbitrary integrable function. Taking the It\={o} differential of (\ref
{eq:ortho}) we find $I+II+III=0$ where 
\begin{eqnarray*}
I &=&\mathbb{E}\left[ \left( d\pi \left( X\right) -dj\left( X\right) \right)
y\right] \\
&=&\mathbb{E}\left[ \left( \mathscr{H}(X)dt+\mathscr{G}\left( X\right)
\left( e^{i\theta \left( t\right) }L+e^{-i\theta \left( t\right) }L^{\ast
}\right) -\mathscr{L}X\right) y\right] dt \\
II &=&\mathbb{E}\left[ \left( \pi \left( X\right) -j\left( X\right) \right)
dy\right] \\
&=&f\left( t\right) \mathbb{E}\left[ \left( \pi \left( X\right) -j\left(
X\right) \right) \left( e^{i\theta \left( t\right) }L+e^{-i\theta \left(
t\right) }L^{\ast }\right) y\right] dt
\end{eqnarray*}
and the It\={o} correction is 
\begin{eqnarray*}
III=\mathbb{E}\left[ \left( d\pi \left( X\right) -dj\left( X\right) \right)
dy\right] =f\left( t\right) \mathbb{E}\left[ \left( \mathscr{G}\left(
X\right) -e^{-i\theta \left( t\right) }L^{\ast }\right) y\right] dt.
\end{eqnarray*}
Taking the coefficient of $f\left( t\right) dt$ leads to (\ref{eq:G(X)})
while the remain coefficient leads to 
\begin{eqnarray*}
\mathscr{H}(X)=\mathscr{G}\left( X\right) \pi \left( e^{i\theta \left(
t\right) }L+e^{-i\theta \left( t\right) }L^{\ast }\right) -\pi \left( 
\mathscr{L}X\right) .
\end{eqnarray*}
This gives the desired result.
\end{proof}


\subsubsection{Filtering a Cavity Mode}
\label{subsec:cavity_filter}
We wish to derive the filter for a cavity mode $a$ with frequency $\omega $
and damped by a input field (the lossy cavity mode!) to which is undergoing continuous indirect
measurement of the output quadrature. Specifically, the unitary evolution of
thew mode+field satisfies the QSDE 
\begin{eqnarray*}
dU\left( t\right) =\left\{ LdB\left( t\right) -L^{\ast }dB\left( t\right) -(%
\frac{1}{2}L^{\ast }L+iH)dt\right\} U\left( t\right)
\end{eqnarray*}
with initial condition $U\left( 0\right) =1$ and 
\begin{eqnarray*}
H=\omega a^{\ast }a,\quad L=\sqrt{\gamma }a.
\end{eqnarray*}
In this case the Heisenberg dynamics are linear for $a_{t} \triangleq j_{t}\left( a\right)$, and if we begin in
a Gaussian state for system, then we should have a linear Gaussian filtering problem.

\begin{proposition}[Quantum Kalman Filter]
Assuming a Gaussian initial state, the filtered estimate $\hat{a}_t=\pi _{t}\left( a\right)$ satisfies
\begin{eqnarray}
d\widehat{a_{t}}=-\left( \frac{\gamma }{2}+i\omega \right) \widehat{a_{t}}dt+%
\sqrt{\gamma }\left( \mathscr{W}\left( t\right) e^{i\theta \left( t\right) }+\mathscr{V}\left(
t\right) e^{-i\theta \left( t\right) }\right) dI^\theta (t)   \label{eq:filter_a}
\end{eqnarray}
where $\mathscr{W}\left( t\right) $ and $\mathscr{V}\left( t\right) $ are deterministic function
satisfying the coupled equations 
\begin{eqnarray}
\frac{d\mathscr{V}}{dt} &=&-\gamma \mathscr{V}-\gamma \left| \mathscr{V}+e^{2i\theta \left( t\right)
}\mathscr{W}\right| ^{2},  \label{eq:Ricatti_V} \\
\frac{d\mathscr{W}}{dt} &=&-\left( \gamma +2i\omega \right) \mathscr{V}-\gamma \left(
e^{-i\theta \left( t\right) }\mathscr{V}+e^{i\theta \left( t\right) }\mathscr{W}\right) ^{2}
\label{eq:Ricatti_W}
\end{eqnarray}
with $\mathscr{V}\left( 0\right) =\mathbb{E}\left[ a^{\ast }a\right] -\mathbb{E}\left[
a^{\ast }\right] \mathbb{E}\left[ a\right] $ and $\mathscr{W}\left( 0\right) =\mathbb{E%
}\left[ a^{2}\right] -\mathbb{E}\left[ a\right] ^{2}$. and $\mathscr{W}\left( 0\right) =\mathbb{E%
}\left[ a^{2}\right] -\mathbb{E}\left[ a\right] ^{2}$.
\end{proposition}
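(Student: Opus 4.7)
The plan is to derive the three equations by specializing the Belavkin-Kushner-Stratonovich equation (\ref{eq:q_filter}) to the observables $X=a$, $X=a^{\ast}a$, and $X=a^{2}$, and then closing the resulting hierarchy using the Gaussian property of the conditional state.

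First I would apply the filter equation to $X=a$. With $L=\sqrt{\gamma}a$ and $H=\omega a^{\ast}a$, a direct computation of the Lindbladian gives $\mathscr{L}a = -(\tfrac{\gamma}{2}+i\omega)a$, matching the drift in (\ref{eq:filter_a}). Using $e^{i\theta(t)}aL+e^{-i\theta(t)}L^{\ast}a = \sqrt{\gamma}\,[e^{i\theta(t)}a^{2}+e^{-i\theta(t)}a^{\ast}a]$, the diffusion coefficient from (\ref{eq:G(X)}) becomes
\begin{eqnarray*}
\mathscr{G}_{t}^{\theta}(a) = \sqrt{\gamma}\, e^{i\theta(t)}\bigl[\pi_{t}(a^{2})-\pi_{t}(a)^{2}\bigr] + \sqrt{\gamma}\, e^{-i\theta(t)}\bigl[\pi_{t}(a^{\ast}a)-\pi_{t}(a^{\ast})\pi_{t}(a)\bigr] ,
\end{eqnarray*}
so (\ref{eq:filter_a}) falls out as soon as we christen the two second-order conditional cumulants $\mathscr{W}(t)$ and $\mathscr{V}(t)$; at this stage they are stochastic processes a priori.

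To obtain equations for $\mathscr{W}$ and $\mathscr{V}$, I would apply the filter equation to $X=a^{\ast}a$ (one checks $\mathscr{L}(a^{\ast}a) = -\gamma a^{\ast}a$) and subtract $d\bigl(\pi_{t}(a^{\ast})\pi_{t}(a)\bigr)$, computed by the Ito product rule with the crucial correction $\mathscr{G}_{t}^{\theta}(a^{\ast})\mathscr{G}_{t}^{\theta}(a)\,dt$. Analogously, for $\mathscr{W}$ I would apply the filter equation to $X=a^{2}$ (giving $\mathscr{L}(a^{2}) = -(\gamma+2i\omega)a^{2}$) and subtract $d\bigl(\pi_{t}(a)^{2}\bigr) = 2\pi_{t}(a)\,d\pi_{t}(a) + \mathscr{G}_{t}^{\theta}(a)^{2}\,dt$. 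A routine calculation identifies the Ito corrections as $\gamma\,|\mathscr{V}+e^{2i\theta(t)}\mathscr{W}|^{2}$ and $\gamma\,(e^{i\theta(t)}\mathscr{W}+e^{-i\theta(t)}\mathscr{V})^{2}$, reproducing the nonlinear right-hand sides of (\ref{eq:Ricatti_V})--(\ref{eq:Ricatti_W}).

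The crux of the argument — and what I expect to be the main obstacle — is that the filter equations for $\pi_{t}(a^{\ast}a)$ and $\pi_{t}(a^{2})$ naively involve the third-order conditional moments $\pi_{t}(a^{3})$, $\pi_{t}(a^{\ast}a^{2})$ and $\pi_{t}((a^{\ast})^{2}a)$, so the hierarchy does not close in general. Because $L$ is linear in $a$ and $H$ is quadratic, however, the Heisenberg-Langevin equations (\ref{eq:heisenberg}) keep $j_{t}(a)$ affine in $a,a^{\ast}$, and the BKS innovation term inherits this structure; combined with the Gaussian initial state, the conditional state remains Gaussian, so that the non-commutative Wick/Isserlis identity yields
\begin{eqnarray*}
\pi_{t}(a^{3}) &=& \pi_{t}(a)^{3} + 3\,\pi_{t}(a)\,\mathscr{W}(t), \\
\pi_{t}(a^{\ast}a^{2}) &=& \pi_{t}(a^{\ast})\pi_{t}(a)^{2} + \pi_{t}(a^{\ast})\,\mathscr{W}(t) + 2\,\pi_{t}(a)\,\mathscr{V}(t),
\end{eqnarray*}
together with the conjugate identity for $\pi_{t}((a^{\ast})^{2}a)$. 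Substituting these expressions, the $dI^{\theta}$-parts of $d\mathscr{W}$ and $d\mathscr{V}$ cancel exactly — precisely the mechanism displayed earlier for the classical Kalman--Bucy filter — leaving the purely deterministic Riccati-type ODEs (\ref{eq:Ricatti_V})--(\ref{eq:Ricatti_W}); the initial conditions follow at once by evaluating $\mathscr{V}$ and $\mathscr{W}$ at $t=0$ against the initial Gaussian state.
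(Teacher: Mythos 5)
Your proposal is correct and follows essentially the same route as the paper: specialize the Belavkin--Kushner--Stratonovich equation to $X=a$, $X=a^{2}$ and $X=a^{\ast}a$, close the third-moment hierarchy via the Gaussian (Wick) identities, and observe that the innovation terms cancel in $d\mathscr{V}$ and $d\mathscr{W}$ after applying the It\={o} product rule, leaving the deterministic Riccati equations. Your Wick formulas are algebraically equivalent to the paper's reduction identities $\pi(a^{3})=3\pi(a^{2})\pi(a)-2\pi(a)^{3}$ and its analogue for $\pi(a^{\ast}a^{2})$, and your remark on why Gaussianity propagates is a small bonus the paper leaves implicit.
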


\begin{proof}
Setting $X=a$ in (\ref{eq:q_filter}) yields (\ref{eq:filter_a}) with 
\begin{eqnarray*}
\mathscr{V}\left( t\right) &=&\pi _{t}\left( a^{\ast }a\right) -\pi _{t}\left( a^{\ast
}\right) \pi _{t}\left( a\right) , \\
\mathscr{W}\left( t\right) &=&\pi _{t}\left( a^{2}\right) -\pi _{t}\left( a\right)
^{2}.
\end{eqnarray*}
We need however to determine the SDEs for $\mathscr{V}$ and $\mathscr{W}$. Setting $X=a^{2}$,
leads to 
\begin{eqnarray*}
d\pi \left( a^{2}\right) &=&-(\gamma +2i\omega )\pi \left( a^{2}\right) dt \\
&&+\sqrt{\gamma }\left\{ \pi \left( a^{3}e^{i\theta }+a^{\ast
}a^{2}e^{-i\theta }\right) -\pi \left( a^{2}\right) \pi \left( ae^{i\theta
}+a^{\ast }e^{-i\theta }\right) \right\} dI^\theta (t) \\
&\equiv &-(\gamma +2i\omega )\pi \left( a^{2}\right) dt+2\sqrt{\gamma }%
\left\{ e^{i\theta }\mathscr{W}+e^{-i\theta }\mathscr{V}\right\} \pi \left( a\right) dI^\theta (t).
\end{eqnarray*}
Which now involves cubic terms. These however may be replaced by lower order moments using the Gaussianity property.
From  the identities  $\pi \left( \left( a-\pi \left(
a\right) \right) ^{3}\right) =0$ and $\pi \left( \left( a-\pi \left(
a\right) \right) ^{\ast }\left( a-\pi \left( a\right) \right) ^{2}\right) =0$, we obtain
\begin{eqnarray*}
\pi (a^{3}) &=&3\pi \left( a^{2}\right) \pi \left( a\right) -2\pi \left(
a\right) ^{3}, \\
\pi \left( a^{\ast }a\right) &=&2\pi \left( a^{\ast }a\right) +\pi \left(
a^{\ast }\right) \pi \left( a^{2}\right) -2\pi \left( a^{\ast }\right) \pi
\left( a\right) ^{2}.
\end{eqnarray*}
In this way we may reduce the order.

From the fact that the innovations  process is a Wiener process, we have 
\begin{eqnarray*}
d\pi \left( a\right) ^{2} &=&2\pi \left( a\right) d\pi \left( a\right)
+\left( d\pi \left( a\right) \right) ^{2} \\
&=&-2\left( \frac{\gamma }{2}+i\omega \right) \pi \left( a\right)
^{2}dt+\gamma \left( e^{-i\theta \left( t\right) }\mathscr{V}+e^{i\theta \left(
t\right) }\mathscr{W}\right) ^{2}dt \\
&&+2\sqrt{\gamma }\left\{ e^{i\theta }\mathscr{W}+e^{-i\theta }\mathscr{V}\right\} \pi \left(
a\right) dI
\end{eqnarray*}
and so 
\begin{eqnarray*}
d\mathscr{W}\left( t\right) &=&d\pi \left( a^{2}\right) -d\pi \left( a\right) ^{2} \\
&=&-\left( \gamma +2i\omega \right) \mathscr{W}dt-\gamma \left( e^{-i\theta \left(
t\right) }\mathscr{V}+e^{i\theta \left( t\right) }\mathscr{W}\right) ^{2}dt
\end{eqnarray*}

Similarly, we have 
\begin{eqnarray*}
d\pi \left( a^{\ast }a\right) &=&-\gamma \pi \left( a^{\ast }a\right) dt\nonumber \\
&&+ \sqrt{\gamma }\bigg\{ \pi \left( a^{\ast }a^{2}e^{i\theta }+a^{\ast
2}ae^{-i\theta }\right) -\pi \left( a^{\ast }a\right) \pi \left( ae^{i\theta
}+a^{\ast }e^{-i\theta }\right) \bigg\} dI^\theta (t) \\
&\equiv &-\gamma \pi \left( a^{\ast }a\right) dt+\nonumber \\
&&2\sqrt{\gamma }\bigg\{
e^{i\theta }[\mathscr{V}\pi \left( a\right) +\mathscr{W}\pi \left( a^{\ast }\right)] +e^{-i\theta
}[\mathscr{V}\pi \left( a^{\ast }\right) +\mathscr{W}^{\ast }\pi \left( a\right) ]\bigg\} dI^\theta (t) ,
\end{eqnarray*}
while 
\begin{eqnarray*}
d\left( \pi \left( a^{\ast }\right) \pi \left( a\right) \right) &=&d\pi
\left( a^{\ast }\right) \,\pi \left( a\right) +\pi \left( a^{\ast }\right)
\,d\pi \left( a\right) +d\pi \left( a^{\ast }\right) \,d\pi \left( a\right)
\\
&=&-\gamma \pi \left( a^{\ast }a\right) dt \\
&&+2\sqrt{\gamma }\left\{ e^{i\theta }[\mathscr{V}\pi \left( a\right) +\mathscr{W}\pi \left(
a^{\ast }\right) ]+e^{-i\theta }[\mathscr{V}\pi \left( a^{\ast }\right) +\mathscr{W}^{\ast }\pi
\left( a\right) ]\right\} dI ^\theta (t) \\
&&+\gamma \left| \mathscr{W}e^{i\theta }+\mathscr{V}e^{-i\theta }\right| ^{2}dt
\end{eqnarray*}
and therefore 
\begin{eqnarray*}
d\mathscr{V}\left( t\right) =d\pi \left( a^{\ast }a\right) -d\left( \pi \left(
a^{\ast }\right) \pi \left( a\right) \right) 
=-\gamma \mathscr{V}dt-\gamma \left| \mathscr{W}e^{i\theta }+\mathscr{V}e^{-i\theta }\right| ^{2}dt.
\end{eqnarray*}
So $\mathscr{V}$ and $\mathscr{W}$ obey deterministic ODEs.
\end{proof}

\bigskip 

We now note that
\begin{eqnarray*}
\mathbb{E}\left[ \left( a_{t}-\widehat{a_{t}}\right) ^{\ast }\left( a_{t}-%
\widehat{a_{t}}\right) \right] =\mathbb{E}\left[ \pi _{t}\left( a^{\ast
}a\right) -\pi _{t}(a^{\ast })\pi _{t}\left( a\right) \right] \equiv
\mathscr{V}\left( t\right)  .
\end{eqnarray*}
If we wish to minimize $\mathbb{E}\left[ \left( a_{t}-\widehat{a_{t}}\right)
^{\ast }\left( a_{t}-\widehat{a_{t}}\right) \right] $ over all quadrature
measurements, then we basically minimize $\mathscr{V}\left( t\right) $ over all choices
of $\theta $. 

\subsection{The Bouten-van Handel Formulation}
Without great loss of generality, let us focus on the single input process example. We wish to compute the filter $\pi_t (\cdot )$ for homodyne
quadrature measurement with $Y(t) = B^{\textit{out}}(t) +B^{\textit{out} \ast} (t)$. We revert to the input picture and set $Z(t) = B(t) + B(t)^\ast$.
Let us introduce $\mathfrak{Z}_{t]} = \text{vN} \{ Z(s) : 0 \leq s \leq t \}$. Then the adapted process $F$ is defined by
\begin{eqnarray}
dF(t) = \big\{ L dZ(t) - (\frac{1}{2} L^\ast L + i H) dt \big\} \, F(t), \qquad F(0) = I,
\label{eq:F}
\end{eqnarray}
has the property that $F(t) \in \mathfrak{F}_{t]}^\prime$ and 
\begin{eqnarray}
\mathbb{E} [j_t (X) ] \equiv \mathbb{E} [ F(t)^\ast X F(t) ] ,
\end{eqnarray}
for every system observable $X$. This is established as Lemma 6.2 in \cite{BvH_ref}. As a corollary, they obtain that
\begin{eqnarray}
\pi_t (X) = \frac{ \sigma_t (X) }{\sigma_t (I) },
\label{eq:QKS}
\end{eqnarray}
where 
\begin{eqnarray}
\sigma_t (X) \triangleq U_t^\ast \, \mathbb{E} \big[ F(t)^\ast X F(t) \mid \mathfrak{Z}_{t]} \big] \, U(t)  .
\label{eq:sigma}
\end{eqnarray}
The equation (\ref{eq:QKS}) may be referred to as the \textit{quantum Kallianpur-Striebel} formula.

It is noteworthy that $\sigma_t (X)$ is the output picture counterpart of the input picture condtional expectation 
$\varsigma_t (X) \triangleq \mathbb{E} \big[ F(t)^\ast X F(t) \mid \mathfrak{Z}_{t]} \big]$,
which involves a conditional expectation with respect to input picture process $Z$. In fact, from (\ref{eq:F})  we readily obtain
\begin{eqnarray}
d \varsigma_t (X) = \varsigma_t ( \mathscr{L} X ) \, dt + \varsigma_t ( XL +L^\ast X ) \, dZ(t) .
\label{eq:varsigma}
\end{eqnarray}
Setting $\varpi_t (X) = \varsigma_t (X) /\varsigma_t  (I)$, it is an easy application of stochastic calculus to show that
\begin{eqnarray}
d \varpi_t (X) &=& \varpi_t ( \mathscr{L} X ) \, dt \nonumber \\
&+& \bigg\{ \varpi_t ( XL +L^\ast X ) -\varpi_t ( X  )\varpi_t ( L +L^\ast  )\bigg\}  \big[ dZ(t) -\varpi_t (L+L^\ast ) dt \big] .
\label{eq:varpi}
\end{eqnarray}
Transferring back to the output picture, we then find that the analogues to (\ref{eq:varsigma}) and (\ref{eq:varpi}) are
\begin{eqnarray}
d \sigma_t (X) = \sigma_t ( \mathscr{L} X ) \, dt + \sigma_t ( XL +L^\ast X ) \, dY(t) .
\end{eqnarray}
and
\begin{eqnarray}
d \pi_t (X) &=& \pi_t ( \mathscr{L} X ) \, dt \nonumber \\
&+& \bigg\{ \pi_t ( XL +L^\ast X ) -\pi_t ( X  )\pi_t ( L +L^\ast  )\bigg\}  \big[ dY(t) -\pi_t (L+L^\ast ) dt \big] ,
\end{eqnarray}
and these are just the general Belavkin-Zakai and Belavkin-Kushner-Stratonovich equations, respectively.

\section{The Input \& Output Pictures}
\label{sec:IO_pictures}
In this section we develop some ideas of controlled flows and measurement-based feedback. To begin with, in classical control theory
one frequently meets the concept of a controlled dynamical system - that is, a system where the evolution is governed by an external
controlling process $W(t)$. The controlling process may be deterministic, or even stochastic. The situation of interest to engineers is
when the controlling process is allowed to depend on the state (or, more generally, past history) of the system itself. This is an example
of feedback, and classically one may have partial observations $V$ of the state of the system - so one effectively wishes to take the controlling
process $W$ to be some function of the observations $V$. $W$ should depend causally on $Y$, and mathematically one would ask that $W$ is adapted
to the filtration of sigma-algebras generated by $V$.

The quantum mechanical version is more involved due to the subtle and notorious issue of measurement. We have discussed how the Hudson-Parthasarathy
theory leads to the output processes. Here we have, for the lack of a better terminology, a \textit{pre-measurement process} $Z$ which is transformed into
the measured output process $Y$ via $Y(t) =U(t)^\ast Z(t) U(t)$. In a controlled unitary dynamics, we would allow the unitary process $U$ to depend on the 
controlling process $W$ in some causal way - $U \equiv U^{[W]}$. This of course means that the measured output $Y$ somehow is influenced by $W$. Making a feedback loop would mean that
$W$ now becomes dependent of $Y$. Ultimately, this implies that the controlling process $W$ somehow is related to the pre-measurement process $Z$.

To try and cut through a logical Gordian knot, we take the simplifying step and identify $Z$ as the controlling process at the outset. While not the most general
situation (we could have $W$ include other exogenous processes), it does allow us to focus on the feedback loop.

The next point of which we need to take account is that the processes $Z$ and $Y$ are incompatible! In fact, it is best to view them as belonging to two different
pictures which we may call the \emph{input picture} and the \textit{output picture}. 
\begin{figure}[htbp]
	\centering
		\includegraphics[width=0.60\textwidth]{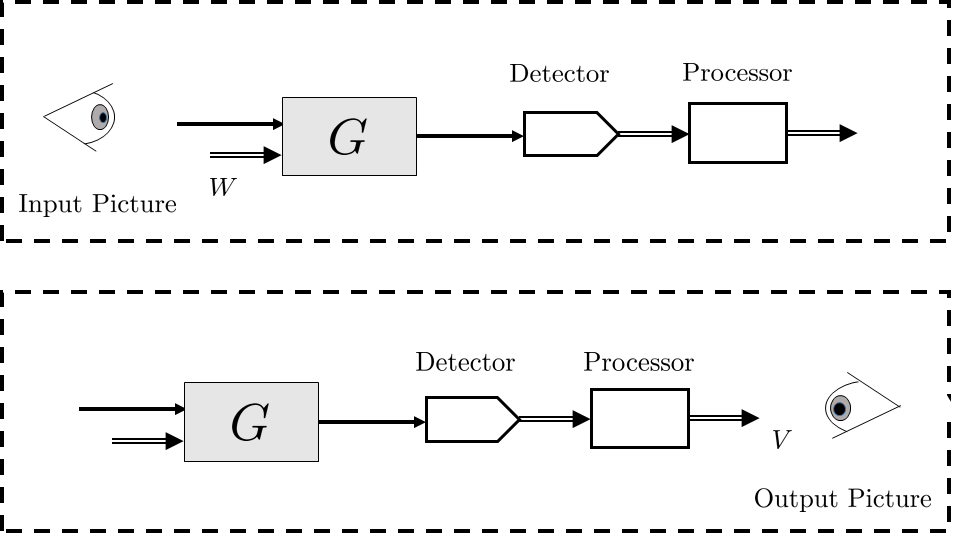}
	\caption{In the input picture, we may view the system as being controlled by some adapted process $W$ which may generally depend on the input: specifically
	$W(t) \in \mathfrak{Z}_{t]}^\prime$. In the output picture, we may be interested in some readout process $V$ which may generally depend on the measurements:
	$V(t) \in \mathfrak{Y}_{t]}^\prime $.}
	\label{fig:wall_eye}
\end{figure}

In the input picture, we describe the world through the controlling process $Z$ - here the unitary $U=U^{[Z]}$
will be influenced by $Z$. More specifically, we have the Hudson-Parthasarathy theory where the $SLH$-coefficients will no longer be fixed system operators,
but generally adapted processes commuting with $Z$. In the output picture, we describe the world through the measured process $Y$. Just as with the Schr\"{o}dinger and Heisenberg pictures,
it is possible to transfer from one to the other, but it is imperative to know which picture you're working in at a particular time, see Figure \ref{fig:wall_eye}.

We now introduce some notation. Let $Z$ be an adapted commutative quantum
stochastic process on the Fock space, and let $\mathfrak{Z}_{t]}=\mathrm{vN}%
\left\{ Z\left( s\right) :0\geq s\leq t\right\} $. 

\begin{definition}
We say that an adapted process, $F$, is causally controlled by $Z$ if, for
each $t$, we have $F\left( t\right) \in \mathfrak{Z}_{t]}^{\prime }$. We often
write $F^{\left[ \left[ Z\right] \right] }$ for emphasis.
\end{definition}

The definition can be expanded by replacing $Z$ with a family $W$ of commutative processes that is enlarged to include exogenous processes.

Note that we will have $F^{\left[ \left[ Z\right] \right] }\left( t\right)
\in \mathfrak{Z}_{t]}^{\prime }\cup \mathfrak{A}_{t]}$ by the adaptedness
requirement.

\begin{definition}
A controlled quantum flow (controlled by $Z$) is the dynamics determined by
the adapted unitary process, $U^{\left[ Z\right] }$, determined by $SLH$%
-coefficients $G^{\left[ \left[ Z\right] \right] }\sim \left( S^{\left[ %
\left[ Z\right] \right] },L^{\left[ \left[ Z\right] \right] },H^{\left[ %
\left[ Z\right] \right] }\right) $ which are adapted processes causally
controlled by $Z$.
\end{definition}

We remark that it is necessary and sufficient that the $SLH$-coefficients satisfy the
previous algebraic conditions - $S^{\left[ \left[ Z\right] \right] }\left(
t\right) $ an $n\times n$ unitary array, and $H^{\left[ \left[ Z\right] %
\right] }\left( t\right) $ self-adjoint - in order that $U^{\left[ Z\right]
}\left( t\right) $ is unitary. For $n=1$, we have
\begin{eqnarray}
dU^{[ Z] }\left( t\right)  &=&\bigg\{ \big( S^{\left[ \left[ Z\right] \right] }( t) -I \big) \otimes d\Lambda \left( t\right) 
+L^{\left[ \left[ Z\right] \right] }\left( t\right) \otimes dB\left( t\right)^{\ast } \nonumber \\
&&-L^{\left[ \left[ Z\right] \right] }\left( t\right) ^{\ast }S^{\left[ \left[ Z\right] \right] }\left( t\right) \otimes dB\left( t\right) \nonumber \\
&&-\big( 
\frac{1}{2}L^{\left[ \left[ Z\right] \right] }\left( t\right) ^{\ast }L^{\left[ \left[ Z\right] \right] }\left( t\right) +iH^{\left[ \left[ Z\right] \right] }\left( t\right) \big) 
\otimes dt \bigg\} U^{\left[ Z\right] }\left(
t\right) ,
\label{eq:U^Z}
\end{eqnarray}
with $U^{\left[ Z\right] }\left( 0\right) =I$. The tensor product in (\ref{eq:U^Z}) is
wrt. $\mathfrak{A}_{t]}\otimes \mathfrak{A}_{[t,\infty )}$. The unitary $U^{\left[ Z%
\right] }\left( t\right) $ would only belong to $\mathfrak{Z}_{t]}^{\prime }$ in
completely trivial situations. Typically it does not commute with
controlling process $Z$. For this reason, we write it as $U^{\left[ Z\right]
}$ so to show that it depends on the controlling process (it does through
its $SLH$-coefficients), rather than as $U^{\left[ \left[ Z\right] \right] }$
since it is not itself a process controlled by $Z$.

The concept of a controlled flow is developed in the input picture. Let us
look at what happens in the output picture. We obtain a second process $%
Y\left( t\right) =U^{\left[ Z\right] }\left( t\right) ^{\ast }Z\left(
t\right) U^{\left[ Z\right] }\left( t\right) $ which corresponds to a
possible measurement process, namely the one obtained from $Z$ using a
component controlled by $Z$. The process $Y$ is then commutative. We may
additionally set $\mathfrak{Y}_{t]}=U^{\left[ Z\right] }\left( t\right) ^{\ast }%
\mathfrak{Z}_{t]}U^{\left[ Z\right] }\left( t\right) ^{\ast }$.

\begin{proposition}
Given an adapted process $F^{\left[ \left[ Z\right] \right] }$ controlled by 
$Z$, the process $\tilde{F}^{\left[ \left[ Y\right] \right] }$ defined by
\begin{eqnarray*}
\tilde{F}^{\left[ \left[ Y\right] \right] }\left( t\right) =U^{\left[ Z%
\right] }\left( t\right) ^{\ast }F^{\left[ \left[ Z\right] \right] }\left(
t\right) U^{\left[ Z\right] }\left( t\right) 
\end{eqnarray*}
is an adapted process controlled by $Y$ (that is, $\tilde{F}^{\left[ \left[ Y%
\right] \right] }\left( t\right) \in \mathfrak{Y}_{t]}^{\prime }\cup \mathfrak{A}%
_{t]}$ for each $t$).
\end{proposition}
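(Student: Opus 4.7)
The proof breaks naturally into two independent claims: adaptedness ($\tilde{F}^{[[Y]]}(t) \in \mathfrak{A}_{t]}$) and commutation with the measurement algebra ($\tilde{F}^{[[Y]]}(t) \in \mathfrak{Y}_{t]}^\prime$). The plan is to dispatch the first by inspection and handle the second with the standard cocycle/locality property of solutions to the Hudson--Parthasarathy QSDE.

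\textbf{Step 1 (adaptedness).} First I note that $U^{[Z]}(t) \in \mathfrak{A}_{t]}$ because it is the solution of the adapted QSDE (\ref{eq:U^Z}), and $F^{[[Z]]}(t) \in \mathfrak{A}_{t]}$ by hypothesis. Since $\mathfrak{A}_{t]}$ is an algebra, $\tilde{F}^{[[Y]]}(t) = U^{[Z]}(t)^\ast F^{[[Z]]}(t) U^{[Z]}(t)$ belongs to $\mathfrak{A}_{t]}$.

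\textbf{Step 2 (commutation with $\mathfrak{Y}_{t]}$).} It suffices to check that $\tilde{F}^{[[Y]]}(t)$ commutes with every generator $Y(s) = U^{[Z]}(s)^\ast Z(s) U^{[Z]}(s)$ with $0 \leq s \leq t$, since $\mathfrak{Y}_{t]}$ is the von Neumann algebra generated by these. The crucial observation, already exploited earlier in the paper in the discussion following the definition of measured processes, is that for $s \leq t$ we may rewrite
\begin{eqnarray*}
Y(s) = U^{[Z]}(t)^\ast Z(s) U^{[Z]}(t).
\end{eqnarray*}
This uses the cocycle factorization $U^{[Z]}(t) = U^{[Z]}(t,s) U^{[Z]}(s)$, where the increment $U^{[Z]}(t,s)$ acts trivially on $\mathfrak{H}_{[0,s]}$ and therefore commutes with $Z(s) \in \mathfrak{A}_{s]}$.

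\textbf{Step 3 (conclusion).} With $Y(s)$ expressed in this form, the conjugation by $U^{[Z]}(t)$ is now common to both factors, and a direct computation gives
\begin{eqnarray*}
\bigl[ \tilde{F}^{[[Y]]}(t), Y(s) \bigr] = U^{[Z]}(t)^\ast \bigl[ F^{[[Z]]}(t), Z(s) \bigr] U^{[Z]}(t).
\end{eqnarray*}
Because $s \leq t$ we have $Z(s) \in \mathfrak{Z}_{t]}$, and the defining property $F^{[[Z]]}(t) \in \mathfrak{Z}_{t]}^\prime$ forces the inner commutator to vanish. Hence $\tilde{F}^{[[Y]]}(t)$ commutes with every generator of $\mathfrak{Y}_{t]}$, and by the double commutant theorem (or simply because commutation with a generating set extends to the generated von Neumann algebra) it lies in $\mathfrak{Y}_{t]}^\prime$. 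Combined with Step 1 this yields the claim.

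The only nontrivial ingredient is the rewriting $U^{[Z]}(s)^\ast Z(s) U^{[Z]}(s) = U^{[Z]}(t)^\ast Z(s) U^{[Z]}(t)$, which rests on the fact that the QSDE (\ref{eq:U^Z}) produces a left cocycle whose forward increments act trivially on the past Fock component. Everything else is algebraic manipulation of commutators. I expect this locality property to be the main (but standard) technical point — it was used implicitly in the earlier derivation of the filtering setup, and the controlled case causes no additional difficulty provided the $SLH$-coefficients $(S^{[[Z]]}, L^{[[Z]]}, H^{[[Z]]})$ are themselves adapted and causally controlled by $Z$, as assumed in the definition of a controlled flow.
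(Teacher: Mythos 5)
Your argument is correct in substance but takes a longer route than the paper. The paper, in the controlled-flow section, \emph{defines} $\mathfrak{Y}_{t]}$ directly as $U^{[Z]}(t)^\ast\, \mathfrak{Z}_{t]}\, U^{[Z]}(t)$, i.e.\ as the conjugate of the whole input algebra by the single time-$t$ unitary. Consequently every $\tilde{K}(t)\in\mathfrak{Y}_{t]}$ is by definition of the form $U^{[Z]}(t)^\ast K(t) U^{[Z]}(t)$ with $K(t)\in\mathfrak{Z}_{t]}$, and the proof collapses to the single commutator identity you write in Step 3 — no cocycle factorization is needed. You instead start from the earlier ``measured process'' description of $\mathfrak{Y}_{t]}$ as generated by $Y(s)=U^{[Z]}(s)^\ast Z(s)U^{[Z]}(s)$, and then must re-align the conjugating unitaries via $Y(s)=U^{[Z]}(t)^\ast Z(s)U^{[Z]}(t)$. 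That identity is true and your overall proof goes through; as a bonus it shows the two descriptions of $\mathfrak{Y}_{t]}$ agree, which the paper leaves implicit.

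One caveat on your justification of that identity: in the controlled setting the increment $U^{[Z]}(t)U^{[Z]}(s)^\ast$ does \emph{not} act trivially on $\mathfrak{H}_{[0,s]}$ — it acts on the initial space, and its $SLH$-coefficients over $[s,t]$ are adapted processes that may themselves depend on the past Fock component (they are functions of $Z(s')$, $s'\le t$). The correct reason the increment commutes with $Z(s)$ is weaker: the driving differentials $dB$, $dB^\ast$, $d\Lambda$ over $(s,t]$ live on the future Fock factor, and the controlled coefficients at times $t'\ge s$ lie in $\mathfrak{Z}_{t']}^\prime\subseteq\mathfrak{Z}_{s]}^\prime$, so every term in the iterated-integral expansion of the increment commutes with $Z(s)$. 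With that repair your Step 2 is sound; the rest is fine.
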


\begin{proof}
This is relatively straightforward to see. Let $\tilde{K}\left( t\right) \in 
\mathfrak{Y}_{t]}$, then we should have $\tilde{K}\left( t\right) =U^{\left[ Z%
\right] }\left( t\right) ^{\ast }K\left( t\right) U^{\left[ Z\right] }\left(
t\right) $ for some $K\left( t\right) \in \mathfrak{Z}_{t]}$. Therefore
\begin{eqnarray*}
\left[ \tilde{F}^{\left[ \left[ Y\right] \right] }\left( t\right) ,\tilde{K}%
\left( t\right) \right] =U^{\left[ Z\right] }\left( t\right) ^{\ast }\left[
F^{\left[ \left[ Z\right] \right] }\left( t\right) ,K\left( t\right) \right]
U^{\left[ Z\right] }\left( t\right) \equiv 0
\end{eqnarray*}
since $F^{\left[ \left[ Z\right] \right] }\left( t\right) $ is in the
commutant of $\mathfrak{Z}_{t]}$. Therefore $\tilde{F}^{\left[ \left[ Y\right] %
\right] }\left( t\right) \in \mathfrak{Y}_{t]}^{\prime }$, and it is clearly
adapted.
\end{proof}

The key correspondences between the two pictures are tabulated below:
\begin{eqnarray*}
\begin{tabular}{lll}
& Input Picture & Output Picture \\ \hline
adapted process & $F^{\left[ \left[ Z\right] \right] }\left( t\right) $ & $%
\tilde{F}^{\left[ \left[ Y\right] \right] }\left( t\right) $ \\ 
basic process & $Z$ (control) & $Y$ (measured) \\ 
system observables & $X$ & $j_{t}^{\left[ \left[ Y\right] \right] }\left(
X\right) $ \\ 
noise & $B\left( t\right) $ & $B^{\left[ \left[ Y\right] \right] \text{out}%
}\left( t\right) $ \\ 
filter & $\varpi _{t}^{\left[ \left[ Z\right] \right] }\left( X\right) $ & $%
\pi _{t}^{\left[ \left[ Y\right] \right] }\left( X\right) $ \\ 
filter (Zakai) & $\varsigma _{t}^{\left[ \left[ Z\right] \right] }$ & $%
\sigma _{t}^{\left[ \left[ Y\right] \right] }$ \\ 
conditional state  & $\varrho _{t}^{\left[ \left[ Z\right] \right] }$ & $%
\rho _{t}^{\left[ \left[ Y\right] \right] }$%
\end{tabular}
.
\end{eqnarray*}
To run through these briefly, we have that $j_{t}^{\left[ \left[ Y\right] 
\right] }\left( X\right) $ is defined to be $\tilde{F}^{\left[ \left[ Y%
\right] \right] }\left( t\right) $ when we take $F^{\left[ \left[ Z\right] 
\right] }\left( t\right) =X$. Likewise,  $B^{\left[ \left[ Y\right] \right] 
\text{out}}\left( t\right) $ is defined to be $\tilde{F}^{\left[ \left[ Y%
\right] \right] }\left( t\right) $ when we take $F^{\left[ \left[ Z\right] 
\right] }\left( t\right) =B\left( t\right) $. The filter estimate for system
observable $X$ at time $t$ - more exactly, for $j_{t}^{\left[ \left[ Y\right]
\right] }\left( X\right) $ - based on measurements $\mathfrak{Y}_{t]}$ is $\pi
_{t}^{\left[ \left[ Y\right] \right] }\left( X\right) $. This may be
transferred back to the input picture to give $\varpi _{t}^{\left[ \left[ Z%
\right] \right] }\left( X\right) $:
\begin{eqnarray*}
\pi _{t}^{\left[ \left[ Y\right] \right] }\left( X\right) =U^{\left[ Z\right]
}\left( t\right) ^{\ast }\varpi _{t}^{\left[ \left[ Z\right] \right] }\left(
X\right) U^{\left[ Z\right] }\left( t\right) .
\end{eqnarray*}
We may write
\begin{eqnarray*}
\pi _{t}^{\left[ \left[ Y\right] \right] }\left( X\right) =\text{tr}\left\{
\rho _{t}^{\left[ \left[ Y\right] \right] }X\right\} \text{, or }\varpi
_{t}^{\left[ \left[ Z\right] \right] }\left( X\right) =\text{tr}\left\{
\varrho _{t}^{\left[ \left[ Z\right] \right] }X\right\} .
\end{eqnarray*}
We remark that the conditioned state is logically understood as being
conditioned with respect to measurement process - in other words, it is $%
\rho _{t}^{\left[ \left[ Y\right] \right] }$. The Bouten-van Handel
formulation allows a simple derivation of the filter equations by computing
the input picture $\varrho _{t}^{\left[ \left[ Z\right] \right] }$ and
transferring over to the output picture. Of course, the measurements are
made in the output picture though we can use the circumlocution that we are
measuring the process that corresponds to $Z$ in the input picture. While
serviceable, we have to be careful when jumping between the pictures. In the 
$n=2$ case, we could include a beam-splitter - $S$ non-diagonal - in which
case the first output quadrature $Y_{1}=B_{1}^{\text{out}}+B_{1}^{\text{out*}%
}$ is mathematically a linear combination of $Z_{1}=B_{1}+B_{1}^{\ast }$ and 
$Z_{2}=B_{2}+B_{2}^{\ast }$ even though it is output picture form of just $Z_{1}$.

We also remark that the Heisenberg equations now take the form 
\begin{eqnarray}
dj_{t}^{[[Y]]}\left( X\right) =\sum_{\alpha =0}^{n}\sum_{\beta =0}^{n}%
\mathscr{L}_{\alpha \beta ,t}^{[[Y]]}\left( j_{t}^{[[Y]]}(X)\right) \,d\Lambda
^{\alpha \beta }\left( t\right) 
\label{eq:controlled_Heis}
\end{eqnarray}
where $\mathscr{L}_{\alpha \beta ,t}^{[[Y]]}\left( \cdot \right) $ are the
super-operators obtained by replacing the previously constant $SLH$-coefficients in $%
\mathscr{L}_{\alpha \beta }$ with their counterpart$ \left( \tilde{S}^{[[Y]]}(t), \tilde{L}^{[[Y]]}(t),
\tilde{H}^{[[Y]]}(t) \right)$.

The output fields satisfy
\begin{eqnarray}
dB^{\text{out}}\left( t\right) =\tilde{S}^{\left[ \left[ Y\right] \right]
}\left( t\right) dB\left( t\right) +\tilde{L}^{\left[ \left[ Y\right] \right]
}\left( t\right) dt+\sum_{\alpha =0}^{n}\sum_{\beta =0}^{n}\mathscr{L}%
_{\alpha \beta ,t}^{[[Y]]}\big( B\left( t\right) \big) \,d\Lambda
^{\alpha \beta }\left( t\right) .
\label{eq:B_out}
\end{eqnarray}
The final part here is not present in the constant $SLH$-coefficient case. Despite the complications in (\ref{eq:B_out})
it is worth recalling that $Y(t) \equiv U^{\left[ Z\right]
}\left( t\right) ^{\ast }  Z(t) \, U^{\left[ Z\right] }\left( t\right)$ no matter how complicated $Z$ is in terms of the input field processes.

\begin{definition}
A conditional-state controlled system is one where the $SLH$-coefficients
are functions of the current state $\varrho _{t}^{\left[ \left[ Z\right] 
\right] }$.
\end{definition}

The basic idea is presented in Figure \ref{fig:state_control}. We first describe the open loop model 
in the traditional output picture - here the system is driven by a input noise and a controlling process,
and the output field travels to a detector where a feature is measured (this will be the $Y$ process), then
a filter computes the conditional state $\rho_t^{[[Y]]}$. Below that we consider the equivalent description of the
open loop model in the input picture where everything is written in terms of the controlling process $Z$,
and in this picture the filter computes $\varrho_t^{[[Z]]}$. Finally, we make the feedback loop which
corresponds to taking the controlling process to depend on $\varrho_t^{[[Z]]}$. Note that we may make further modifications such as applying a classical input function $r(t)$ in the return loop.

\begin{figure}[h]
	\centering
		\includegraphics[width=0.750\textwidth]{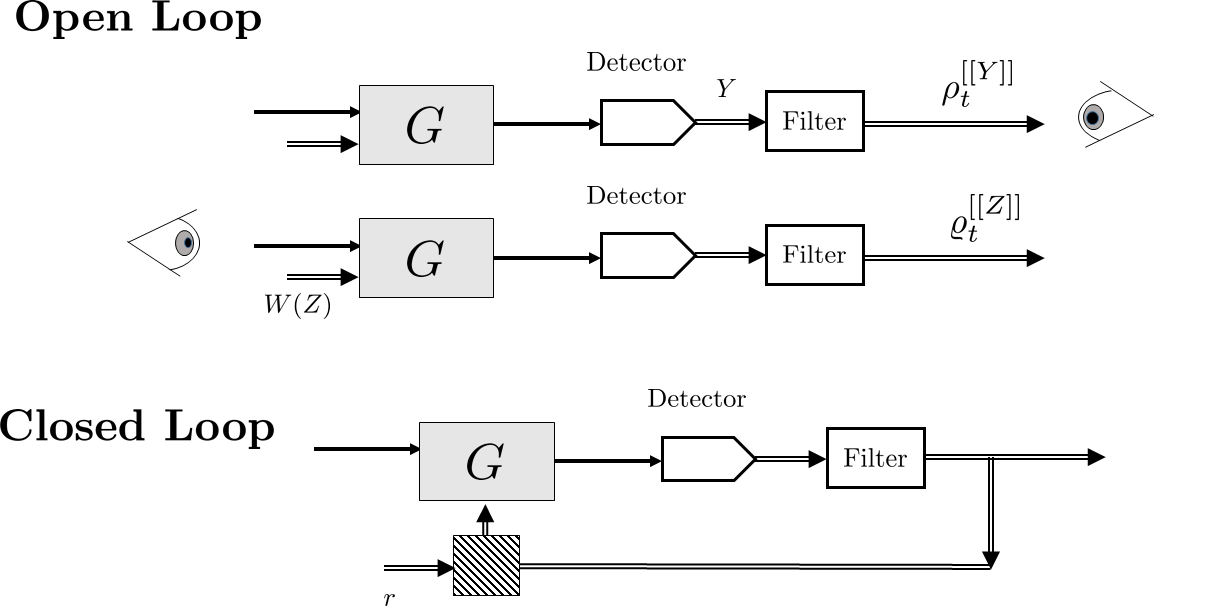}
		\caption{Before feedback, we may present the open loop model in either the input or output picture.
		The closed loop model is depicted underneath.}
	\label{fig:state_control}
\end{figure}

In principle, if we are observing $Y$ then filter will compute $\rho_t^{[[Y]]}$ for us and that this
will be available for feedback. We do not need to use full details of the $\rho^{[[Y]]}_t$, and instead
might be content to use the conditional averages of a fixed set, $\{ X_1 , \cdots , X_n \}$, of system
observables. Returning to the input picture, this would mean that the $SLH$-coefficients are functions of 
the $ \varpi_t^{[[Z]]} (X_k ) = \mathrm{tr} \big\{ \varrho_t^{[[Z]]} X_k \big\}$ (note that these are averages
and so are commuting scalar processes adapted to the pre-measurement process $Z$.

All of this avoids the discussion on how physically the $SLH$-coefficients are made to depend on 
$\varrho_t^{[[Z]]}$, and this ultimately comes down to the ingenuity of the experimentalists
in the laboratory.

\section{Quantum PID Controller}
\label{sec:PID_controller}
In this section we outline a specific model based around the traditional Kalman filter
where we continuously monitor a cavity mode.

As in Subsection \ref{subsec:cavity_filter}, we denote the mode by $a$ but now take $SLH$-coefficients to be
\begin{eqnarray}
G^{[[Z]]}_t \sim \bigg( I, \sqrt{\gamma} a, \omega a^\ast a +i \beta^{[[Z]]} (t) a^\ast -i \beta^{[[Z]] \ast} (t)
a^\ast \bigg),
\label{eq:G_cav_cont}
\end{eqnarray}
where $\beta^{[[Z]]} (t)$ is some as yet unspecified complex-valued adapted process controlled by $Z$.

The Heisenberg-Langevin equation satisfied by $a_t =j_t^{[[Y]]} (a)$ is now
\begin{eqnarray}
da_t = -\left( \frac{\gamma }{2}+i\omega \right) a_{t} \, dt + \tilde{ \beta}^{[[Y]]}(t) \, dt
- \sqrt{\gamma} \, dB(t).
\end{eqnarray}
Note that this is in the output picture and involves  $\tilde{ \beta}^{[[Y]]} (t)\equiv 
U^{[Z]  } (t)^\ast \beta^{[[Z]]} (t) \, U^{[Z]} (t)$ the output picture form of
the control process.

If we wish to measure the $\theta =0$ quadrature, then the corresponding Belavkin-Kalman filter 
for $\hat {a}_t = \pi^{[[Y]]}_t (a)$ will be
\begin{eqnarray}
d\widehat{a}_{t}=-\left( \frac{\gamma }{2}+i\omega \right) \widehat{a}_{t}\, dt
+ \tilde{ \beta}^{[[Y]]}(t) \, dt
+ \sqrt{\gamma }\left( \mathscr{W}\left( t\right) +\mathscr{V}\left(
t\right) \right) dI (t)  , 
\label{eq:filter_a_cont}
\end{eqnarray}
with innovations $dI(t) = dY(t) - \sqrt{\gamma} \big( \hat{a}_t + \hat{a}^\ast_t \big) \, dt$.
Fortunately, the covariances $\mathscr{V}$ and $\mathscr{W}$ will be the same deterministic 
functions encountered before but with $\theta \equiv 0$: that is, they solve
\begin{eqnarray}
\frac{d\mathscr{V}}{dt} &=&-\gamma \mathscr{V}-\gamma \left| \mathscr{V}+\mathscr{W}\right| ^{2},  
\label{eq:Ricatti_V_0} \\
\frac{d\mathscr{W}}{dt} &=&-\left( \gamma +2i\omega \right) \mathscr{V}-\gamma \left(
\mathscr{V}+\mathscr{W}\right) ^{2}
\label{eq:Ricatti_W_0}
\end{eqnarray}
with initial conditions $\mathscr{W}\left( 0\right) =\mathbb{E}\left[ a^{2}\right] -\mathbb{E}\left[ a\right] ^{2}$.

We now revert back to the input picture where (\ref{eq:filter_a_cont}) becomes
\begin{eqnarray}
d\varpi_t^{[[Z]]} (a) =-\left( \frac{\gamma }{2}+i\omega \right)  \varpi_t^{[[Z]]} (a) \, dt
+ \beta^{[[Z]]}(t) \, dt
+ \sqrt{\gamma }\left( \mathscr{W}\left( t\right) +\mathscr{V}\left(
t\right) \right) dJ (t) ,
\label{eq:filter_a_cont_input}
\end{eqnarray}
where $dJ (t) = dZ(t) - \sqrt{\gamma} \big( \varpi_t^{[[Z]]} (a)+\varpi_t^{[[Z]]} (a)^\ast \big) dt$.

\subsection{Proportional Controllers}

In Figure \ref{fig:qfa_loop}, we close the feedback loop and effectively take
\begin{eqnarray}
 \beta_t^{[[Z]]}  = k \left( r(t) -  \varpi_t^{[[Z]]} (a)   \right)    . 
\label{eq:filter_a_feedback}
\end{eqnarray}
Here we have included a reference signal $r(t)$, and have computed the error $e(t) = r(t) -  \varpi_t^{[[Z]]} (a)$.
This is then multiplied by a gain constant $k$ before being fed back in as the controlling process $\beta^{[[Z]]}_t$.

Returning once more to the output picture, we have
\begin{eqnarray}
d\widehat{a}_{t}=-\left( \frac{\gamma }{2}+i\omega \right) \widehat{a}_{t} \, dt
+ k \left( r(t) - \hat{a}_t \right) \, dt
+ \sqrt{\gamma }\left( \mathscr{W}\left( t\right) +\mathscr{V}\left(
t\right) \right) dI (t)  .
\label{eq:filter_a_cont_output}
\end{eqnarray}

\begin{figure}[h]
	\centering
		\includegraphics[width=1.00\textwidth]{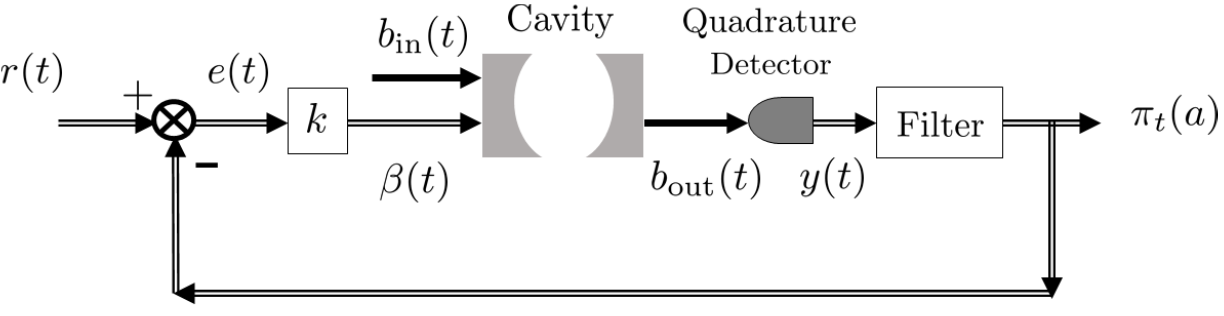}
	\caption{Cavity mode in a feedback loop.}
	\label{fig:qfa_loop}
\end{figure}

In the large gain limit ($k \to \infty$) we may expect the cavity mode to track the reference signal $r(t)$.

The controlled Hamiltonian for the proportional controller model presented here takes the form
$H^{[[Z]]} (t) = \omega a^\ast a+ H_{P}^{[[Z]]}(t) $ where
\begin{eqnarray}
H_{P}^{[[Z]]} (t) =    i k_P \bigg( r(t) - \varpi_t^{[[Z]]} (a) \bigg) + H.c.   
\label{eq:Ham_P}
\end{eqnarray}

\subsection{Proportional-Integral Controllers}
The situation depicted in Figure \ref{fig:qfa_loop} is that of proportional feedback. The controller acts simply by multiplying the error $e(t)$ by a constant gain $k$.
More generally, we may consider a PI (proportional-integral) controller for which the output is
\begin{eqnarray}
\beta (t) = k_P \, e(t) + k_I \int_0^t e(t^\prime) dt^\prime ,
\end{eqnarray}
where $k_P$ and $k_I$ are constants known as the proportional and integral gains respectively. We shall denote the Laplace transform of a function $f=f(t)$ by
$f[s] = \int_0^\infty e^{-st} f(t) dt$ - that is, with square brackets for the transform parameter. We see that $\beta [s] = \mathsf{K}(s) \, e[s]$ where the controller 
transfer function is $\mathsf{K}(s) = k_P + k_I \, \frac{1}{s}$.

It is easy to see that the analogue of (\ref{eq:filter_a_cont_output}) will be
\begin{eqnarray}
\hat{a} [s] = \mathsf{H}(s) \, r[s] + \frac{\mathsf{G}(s)}{1 + \mathsf{G}(s) \mathsf{K}(s)} \,
 \big\{ N(s) - \mathbb{E}_0 [a] \big\} ,
\label{eq:filter_a_PID_out}
\end{eqnarray}
where the open loop transfer function for the system mode is $\mathsf{G}(s) 
= (s+ \frac{\gamma}{2} +i \omega )^{-1}$ and the closed loop transfer function is again $ \mathsf{H}
=  \mathsf{GK}/ (1+\mathsf{GK})$. Note that we have introduce the noise term
$N(s) \triangleq \sqrt{\gamma } \int_0^\infty \left( \mathscr{W}\left( t'\right) +\mathscr{V}\left(
t'\right) \right) dI (t')$  due to the innovations.

For the PI problem, the closed-loop transfer function is
\begin{eqnarray}
\mathsf{H}_{PI} (s) = \frac{sk_P +k_I}{ s^2 + ( k_P +\frac{\gamma}{2} +i \omega) +k_I}.
\label{eq:H_PI}
\end{eqnarray}
and it the special case of a proportional controller ($k_I =0$) we have just
\begin{eqnarray}
\mathsf{H}_{P} (s) = \frac{k_P }{ s + ( k_P +\frac{\gamma}{2} +i \omega) }.
\label{eq:H_P}
\end{eqnarray}

From the perspective of the input picture, we have the controlled dynamics with fixed $S=I$, $L= \sqrt{\gamma} a$ and controlled Hamiltonian $H_t^{[[Z]]} = \omega a^\ast a+ H_{P}^{[[Z]]} (t) + H_{I}^{[[Z]]} (t)$ where
\begin{eqnarray}
H_{I}^{[[Z]]} (t) =   i k_I \int_0^t \big( r(t') - \varpi_{t'}^{[[Z]]} (a) \big) dt' \,  a^\ast + H.c.   
\label{eq:Ham_PI}
\end{eqnarray}

\subsection{PID Controllers}
A widely used controller in classical problem is the PID controller which generalizes the PI controller by allowing and additional derivative term:
\begin{eqnarray}
\beta (t) = k_P \, e(t) + k_I \int_0^t e(t^\prime) dt^\prime  + k_D \dot e (t) .
\end{eqnarray}
At the level of the transfer function this seems at first sight like a simple extension to $\mathsf{K}(s) 
= k_P +  k_I \, \frac{1}{s}+ k_D \, s$. This would correspond to the closed loop transfer function
\begin{eqnarray}
\mathsf{H}_{PID} (s) = \frac{k_D s^2+sk_P +k_I}{ (1+k_D)s^2 + ( k_P +\frac{\gamma}{2} +i \omega) +k_I}.
\label{eq:H_PID}
\end{eqnarray}

However, there is the consistency
problem here that the error is now the difference of the reference signal and the filtered estimate $\varpi_t^{[[Z]]}(a)$ but the latter is a stochastic process
and not differentiable. While we could formally write down a Hamiltonian $H_t^{[[Z]]} $ corresponding to (\ref{eq:Ham_PI}), the new terms would involve the 
formal derivatives of $\varpi_t^{[[Z]]}(a)$ and lie outside the class of model $G^{[[Z]]}_t \sim \big( I, \sqrt{\gamma} a, H^{[[Z]]} (t) \big)$ considered in (\ref{eq:G_cav_cont}).

For convenience, we drop the P and I elements of the controller, and focus
on realizing the D term. Here the feedback law (in the input picture) reads
as
\begin{eqnarray}
\beta ^{\left[\left[ Z\right] \right] }\left( t\right) =k_{D}\dot{e}^{\left[ \left[
Z\right] \right] }\left( t\right) \equiv k_{D}\left( \dot{r}\left( t\right)
-\varpi _{t}^{\left[ \left[ Z\right] \right] }\left( a\right) \right) 
\end{eqnarray}
in which case we have formally
\begin{eqnarray*}
``-iH_{t}^{\left[ \left[ Z\right] \right] }dt"=-i\omega a^{\ast
}a\,dt+k_{D}\left( \dot{r}\left( t\right) a^{\ast }-\text{H.c.}\right)
dt-k_{D}\left( d\varpi _{t}^{\left[ \left[ Z\right] \right] }\left( a\right)
a^{\ast }-\text{H.c.}\right) .
\end{eqnarray*}
We now make the ansatz that the dynamics is based on a specific controlled
flow $G_{t}^{\left[ \left[ Z\right] \right] }\sim \left( I,L_{t}^{\left[ 
\left[ Z\right] \right] },H_{t}^{\left[ \left[ Z\right] \right] }\right) $
whose form is to be deduced. The input picture form of the filter will then
take the form
\begin{eqnarray}
d\varpi _{t}^{\left[ \left[ Z\right] \right] }\left( a\right) =\left\{
\varpi _{t}^{\left[ \left[ Z\right] \right] }\left( \mathscr{L}_{t}^{\left[ 
\left[ Z\right] \right] }a\right) -\lambda _{t}^{\left[ \left[ Z\right] 
\right] }\Xi _{t}^{\left[ \left[ Z\right] \right] }\right\} dt+\Xi _{t}^{%
\left[ \left[ Z\right] \right] }dZ\left( t\right) 
\end{eqnarray}
where $\Xi _{t}^{\left[ \left[ Z\right] \right] }$ is the covariance part of
the filter equation,
\begin{eqnarray}
\Xi _{t}^{\left[ \left[ Z\right] \right] } \triangleq \varpi _{t}^{\left[ \left[ Z%
\right] \right] }\left( aL_{t}^{\left[ \left[ Z\right] \right] }+L_{t}^{%
\left[ \left[ Z\right] \right] \ast }a\right) -\varpi _{t}^{\left[ \left[ Z%
\right] \right] }\left( a\right) \varpi _{t}^{\left[ \left[ Z\right] \right]
}\left( L_{t}^{\left[ \left[ Z\right] \right] }+L_{t}^{\left[ \left[ Z\right]
\right] \ast }\right) ,
\end{eqnarray}
and where $\lambda _{t}^{\left[ \left[ Z\right] \right] }=\varpi _{t}^{\left[ 
\left[ Z\right] \right] }\left( L_{t}^{\left[ \left[ Z\right] \right]
}+L_{t}^{\left[ \left[ Z\right] \right] \ast }\right) $. We therefore have
that
\begin{eqnarray*}
``-iH_{t}^{\left[ \left[ Z\right] \right] }dt" &=&-i\omega a^{\ast
}adt \nonumber \\
&&+k_{D}\left( \dot{r}\left( t\right) a^{\ast }-\left\{ \varpi _{t}^{\left[
\left[ Z\right] \right] }\left( \mathscr{L}_{t}^{\left[ \left[ Z\right] 
\right] }a\right) -\lambda _{t}^{\left[ \left[ Z\right] \right] }\Xi _{t}^{%
\left[ \left[ Z\right] \right] }\right\} a^\ast -\text{H.c.}\right) dt \\
&&-k_{D}\left( \Xi _{t}^{\left[ \left[ Z\right] \right] }a^{\ast }-\Xi _{t}^{%
\left[ \left[ Z\right] \right] \ast }a\right) \left( dB\left( t\right)
+dB\left( t\right) ^{\ast }\right) 
\end{eqnarray*}
where we now use the fact that $dZ=dB+dB^{\ast }$.

The fluctuating parts should now be looked upon as arising from
modifications to the $L$ coefficient. Indeed we should now set
\begin{eqnarray}
L_{t}^{\left[ \left[ Z\right] \right] }=\sqrt{\gamma }a-iF_D^{\left[ \left[
Z\right] \right] } (t)
\label{eq:L_D}
\end{eqnarray}
where $F_{D}^{\left[ \left[ Z\right] \right] } (t)$ is the self-adjoint process
\begin{eqnarray}
F_{D}^{\left[ \left[ Z\right] \right] } (t) \triangleq ik_{D}\left( \Xi _{t}^{%
\left[ \left[ Z\right] \right] \ast }a-\Xi _{t}^{\left[ \left[ Z\right] %
\right] }a^{\ast }\right) .
\label{eq:F_D}
\end{eqnarray}
For the Hamiltonian we shall take $H^{\left[ \left[ Z\right] \right] } (t)
=\omega a^{\ast }a +H_{D}^{\left[ \left[ Z\right] \right] } (t)$ where
\begin{eqnarray}
H_{D}^{\left[ \left[ Z\right] \right] } (t) &=&ik_{D}\left( 
\dot{r}\left( t\right) a^{\ast }-\left\{ \varpi _{t}^{\left[ \left[ Z\right] %
\right] }\left( \mathscr{L}_{t}^{\left[ \left[ Z\right] \right] }a\right)
+\lambda _{t}^{\left[ \left[ Z\right] \right] }\Xi _{t}^{\left[ \left[ Z%
\right] \right] }\right\} a^\ast -\text{H.c.}\right)  \nonumber\\
&&+\frac{1}{2}\sqrt{\gamma }\left( F_{D}^{\left[ \left[ Z\right] \right]
}(t) \, a+a^{\ast }\, F_{D}^{\left[ \left[ Z\right] \right] }(t) \right) .
\label{eq:H_D}
\end{eqnarray}
The final term in (\ref{eq:H_D}) requires some explanation. Ultimately, this goes back
to Wiseman's analysis of direct feedback \cite{Wiseman} and deals with the issue of
feeding back an observed photo-current $y\left( t\right) $ to a plant by
means of an additional Hamiltonian term $F \, y\left( t\right) $ for $F$ a
self-adjoint system operator. This resulted in a shift $L_{0}\rightarrow
L_{0}-iF$ of the coupling operator, $L_{0}=\sqrt{\gamma }a$, but also a
shift of the $H_{0}\rightarrow H_{0}+\frac{1}{2}\left( FL_{0}+L_{0}^{\ast
}F\right) $ of the bare Hamiltonian, $H_{0}=\omega a^{\ast }a$. The
description of this is given in \cite{GouJam09b} in terms of the series product where the
interpretation of the signal being given a second pass through the system (a
system in series with itself!) is derived.

We have now imposed a structural form on the controlled $SLH$-coefficients.
They involve the input-picture filter and so depend on these coefficients in
an intimate manner. However, as the model is still fundamentally linear and
Gaussian, we can deduce explicit expressions. To begin with, the
GKS-Lindbladian generator $\mathscr{L}_{t}^{\left[ \left[ Z\right] \right] }$
is structurally determined by (\ref{eq:L_D}) and (\ref{eq:H_D}), so we deduce that
\begin{eqnarray}
\mathscr{L}_{t}^{\left[ \left[ Z\right] \right] }a &=&
-\left( \frac{1}{2}\gamma +i\omega \right) a
-\sqrt{\gamma }k_{D}\Xi _{t}^{\left[ \left[ Z\right] \right] }\left( a+a^{\ast }\right) \nonumber \\
&&+k_{D}\left( \dot{r}\left( t\right) -\varpi _{t}^{\left[ \left[ Z\right] \right] }
\big( \mathscr{L}_{t}^{\left[ \left[ Z\right] \right] }a\big)
+ \lambda _{t}^{\left[ \left[ Z\right] \right] }\, \Xi _{t}^{\left[ \left[ Z%
\right] \right] }\right) ,
\label{eq:L_a}
\end{eqnarray}
and so, by taking conditional expectations and using $\lambda_t^{[[Z]]} = \sqrt{\gamma}
\big( \varpi_t^{[[Z]]}(a) + _t^{[[Z]]}(a^\ast ) \big)$,
\begin{eqnarray*}
\varpi _{t}^{\left[ \left[ Z\right] \right] }\left( \mathscr{L}_{t}^{\left[ %
\left[ Z\right] \right] }a\right) \equiv -\frac{1}{1+k_{D}}\left( \frac{1}{2}%
\gamma +i\omega \right) \varpi _{t}^{\left[ \left[ Z\right] \right] }(a)+%
\frac{k_{D}}{1+k_{D}}\dot{r}\left( t\right) .
\end{eqnarray*}
Substituting this into the input-picture filter equation yields
\begin{eqnarray*}
d\varpi _{t}^{\left[ \left[ Z\right] \right] }\left( a\right)  =\frac{1}{%
1+k_{D}} \bigg\{ -\left( \frac{1}{2}\gamma +i\omega \right) \varpi _{t}^{\left[ \left[
Z\right] \right] }(a)dt +k_{D}\dot{r}\left( t\right) \bigg\} dt  +\Xi _{t}^{\left[ \left[ Z\right] \right] } \big[ dZ\left( t\right) - \lambda_t^{[[Z]]} \, dt \big] .
\end{eqnarray*}

We also note that the covariance may be computed in terms of the functions $%
\mathscr{V}\left( t\right) $ and $\mathscr{W}\left( t\right) $ encountered
earlier, and indeed
\begin{eqnarray*}
\Xi _{t}^{\left[ \left[ Z\right] \right] } &=&\sqrt{\gamma }\varpi _{t}^{%
\left[ \left[ Z\right] \right] }\left( a^{2}+a^{\ast }a\right) -\sqrt{\gamma 
}\varpi _{t}^{\left[ \left[ Z\right] \right] }\left( a\right) \varpi _{t}^{%
\left[ \left[ Z\right] \right] }\left( a+a^{\ast }\right) -i\varpi _{t}^{%
\left[ \left[ Z\right] \right] }\left( \left[ a,F_{t}^{\left[ \left[ Z\right]
\right] }\right] \right)  \\
&\equiv &\sqrt{\gamma }\left[ \mathscr{V}\left( t\right) +\mathscr{W}\left(
t\right) \right] -k_{D}\Xi _{t}^{\left[ \left[ Z\right] \right] },
\end{eqnarray*}
from which we deduce that
\begin{eqnarray}
\Xi _{t}^{\left[ \left[ Z\right] \right] }=\frac{1}{1+k_{D}}\sqrt{\gamma }%
\left[ \mathscr{V}\left( t\right) +\mathscr{W}\left( t\right) \right] .
\label{eq:Xi}
\end{eqnarray}
Transferring to the output picture then gives
\begin{eqnarray}
d\pi _{t}^{[ [ Y] ] } ( a)  =\frac{1}{1+k_{D}}
\bigg\{ - \big( \frac{1}{2} \gamma +i\omega \big) \pi _{t}^{\left[ \left[ Y\right] \right] }(a)dt
+k_{D} \dot{r}\left( t\right) dt + \sqrt{\gamma }\big[ \mathscr{V}\left( t\right) +%
\mathscr{W}\left( t\right) \big] dI\left( t\right) \bigg\} .
\end{eqnarray}

We note that as $k_D \to \infty$, we obtain $d\pi _{t}^{[ [ Y] ] } ( a) \approx \dot{r}(t) dt$ as
required.

\subsection{The Quantum PID Controller}
We may now collect the various elements together, and summarize our results as follows.

\begin{theorem}
\label{Theorem:PID}
The PID controller for a quantum Kalman filter (with gains $k_P , k_I$ and $ k_D$ respectively) for a cavity mode with natural $SLH$-coefficients $G_0 \sim
(S_0=I , L_0 = \sqrt{\gamma}, H_0 = \omega a^\ast a )$ corresponds in the controlled $SLH$ model given in the input picture by $G_t^{[[Z]]} = (I, L^{[[Z]]} (t) , H^{[[Z]]} (t) )$ with $L^{[[Z]]} (t)$ given by (\ref{eq:L_D}),
\begin{eqnarray}
L ^{\left[ \left[ Z\right] \right] } (t)= L_0
\frac{\sqrt{\gamma} k_{D}}{1+ k_D} \bigg(  
\left[ \mathscr{V}\left( t\right) +\mathscr{W}\left( t\right) \right]
a^{\ast } -\mathrm{H.c.} \bigg) ,
\label{eq:L_D_t}
\end{eqnarray}
and 
\begin{eqnarray}
H^{[[Z]]} (t) = H_0 + H^{[[Z]]}_{P} (t) + H^{[[Z]]}_{I} (t)+ H^{[[Z]]}_{D} (t) ,
\end{eqnarray}
with 
\begin{eqnarray}
H^{[[Z]]}_{P} (t) &=& ik_P \bigg[ r(t) - \varpi_t^{[[Z]]} (a) \bigg] a^\ast + \mathrm{H.c.} ; \nonumber\\
H^{[[Z]]}_{I} (t) &=& ik_I \bigg[ \int_0^t \big( r(t') - \varpi_{t'}^{[[Z]]} (a) \big) dt' \bigg] a^\ast + \mathrm{H.c.} ; \nonumber \\
 H^{[[Z]]}_{D} (t) &=& ik_I \bigg[  \dot{r}(t) - \varpi_{t}^{[[Z]]} (\mathscr{L}_t^{[[Z]]}a) \bigg] a^\ast + \mathrm{H.c.} 
\end{eqnarray}
where (the input picture filter equations)
\begin{eqnarray}
d\varpi_t^{[[Z]]} (a) = \varpi_{t}^{[[Z]]} (\mathscr{L}_t^{[[Z]]}a) \, dt + \frac{\sqrt{\gamma}}{1+k_D}
\big[ \mathscr{V}(t) + \mathscr{W} (t) \big] \, dJ(t),
\label{eq:filter_input_PID}
\end{eqnarray}
and we note the supplementary identity
\begin{eqnarray}
 \varpi_{t}^{[[Z]]} (\mathscr{L}_t^{[[Z]]}a) &=& - \frac{1}{1+k_D} \big( \frac{1}{2} \gamma +i \omega \big) \varpi_t^{[[Z]]} (a) 
\nonumber \\
&& +\frac{k_P}{1+k_D} \big[ r(t) - \varpi_t^{[[Z]]} (a)\big] + \frac{k_I}{1+k_D}\bigg[ \int_0^t \big( r(t') - \varpi_{t'}^{[[Z]]} (a) \big) dt' \bigg] \nonumber \\
&&+ \frac{k_D}{1+k_D} \dot{r} (t).
\label{eq:L_a_PID}
\end{eqnarray}
As before $dJ (t) = dZ(t) - \sqrt{\gamma} \big( \varpi_t^{[[Z]]} (a)+\varpi_t^{[[Z]]} (a)^\ast \big) dt$ is the
input picture version of the innovations, and the covariances $\mathscr{V} (t)$ and $\mathscr{W}(t)$ are the
solutions to (\ref{eq:Ricatti_V_0}) and (\ref{eq:Ricatti_W_0}).
\end{theorem}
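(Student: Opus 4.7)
The plan is to assemble the result by treating the P, I, and D contributions separately and then showing that they combine cleanly because only the D term introduces a modification of the coupling operator while the P and I terms appear purely as additive corrections to the Hamiltonian. Since the preceding subsections have already derived the P block (with $H_P^{[[Z]]}$ as in (\ref{eq:Ham_P}) and $L$ unchanged), the I block (with $H_I^{[[Z]]}$ as in (\ref{eq:Ham_PI})), and the D block (leading to (\ref{eq:L_D}), (\ref{eq:F_D}), (\ref{eq:H_D}), and the identity (\ref{eq:Xi}) for $\Xi_t^{[[Z]]}$), the work reduces to verifying that superposing these three causal corrections yields a bona fide controlled SLH model and produces the claimed filter equation together with the supplementary identity (\ref{eq:L_a_PID}).

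First I would fix the ansatz $G_t^{[[Z]]}\sim (I, L^{[[Z]]}(t), H^{[[Z]]}(t))$ with $L^{[[Z]]}(t)$ as in (\ref{eq:L_D_t}) and $H^{[[Z]]}(t)=H_0+H_P^{[[Z]]}(t)+H_I^{[[Z]]}(t)+H_D^{[[Z]]}(t)$. Computing $\mathscr{L}_t^{[[Z]]}a$ from the Lindbladian formula and collecting contributions, the dissipative piece of $L^{[[Z]]}(t)$ still produces the $-\tfrac{\gamma}{2} a$ decay (the Wiseman shift in $H_D^{[[Z]]}$ is precisely what keeps the cross-term $\tfrac{1}{2}\sqrt{\gamma}(F_D a + a^\ast F_D)$ from contaminating the Heisenberg drift), while each of the Hamiltonian additions $H_P, H_I, H_D$ contributes a commutator with $a$ equal to its classical analogue. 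This yields
\begin{eqnarray*}
\mathscr{L}_t^{[[Z]]}a &=& -\bigl(\tfrac{\gamma}{2}+i\omega\bigr)a + k_P\bigl[r(t)-\varpi_t^{[[Z]]}(a)\bigr] + k_I\int_0^t\bigl[r(t')-\varpi_{t'}^{[[Z]]}(a)\bigr]dt'  \\
&& + k_D\bigl[\dot{r}(t)-\varpi_t^{[[Z]]}(\mathscr{L}_t^{[[Z]]}a)\bigr] + (\text{terms annihilated by }\varpi_t^{[[Z]]}),
\end{eqnarray*}
generalising (\ref{eq:L_a}). I would then take the conditional expectation $\varpi_t^{[[Z]]}$ of both sides. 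All terms inside $\varpi_t^{[[Z]]}$ that are already adapted to $\mathfrak{Z}_{t]}$ pass through as scalars, and the implicit occurrence of $\varpi_t^{[[Z]]}(\mathscr{L}_t^{[[Z]]}a)$ on the right-hand side (coming from the D feedback) is solved algebraically, producing the characteristic $1/(1+k_D)$ prefactor of (\ref{eq:L_a_PID}).

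For the covariance coefficient in the filter, I would reuse the identity (\ref{eq:Xi}): the P and I Hamiltonians are linear in $a,a^\ast$ and therefore do not alter the second-order moment equations that fix $\mathscr{V}$ and $\mathscr{W}$, so $\Xi_t^{[[Z]]}$ retains the same closed form $\tfrac{\sqrt{\gamma}}{1+k_D}[\mathscr{V}(t)+\mathscr{W}(t)]$ as in the D-only analysis. Plugging $\varpi_t^{[[Z]]}(\mathscr{L}_t^{[[Z]]}a)$ and $\Xi_t^{[[Z]]}$ into the Belavkin--Kushner--Stratonovich form $d\varpi_t^{[[Z]]}(a)=\varpi_t^{[[Z]]}(\mathscr{L}_t^{[[Z]]}a)\,dt+\Xi_t^{[[Z]]}\,dJ(t)$ yields (\ref{eq:filter_input_PID}). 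Finally, the required consistency check is Gaussianity: because $L^{[[Z]]}(t)$ and $H^{[[Z]]}(t)$ remain at most quadratic in $a,a^\ast$ with coefficients depending only on $\mathfrak{Z}_{t]}$-adapted scalars, the conditional state remains Gaussian, so the second-moment equations (\ref{eq:Ricatti_V_0})--(\ref{eq:Ricatti_W_0}) (which are unaffected by the feedback Hamiltonian since the feedback enters only through first moments) close the system.

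The main obstacle I anticipate is the implicit nature of $\varpi_t^{[[Z]]}(\mathscr{L}_t^{[[Z]]}a)$: because the D-Hamiltonian contains $\varpi_t^{[[Z]]}(\mathscr{L}_t^{[[Z]]}a)$ itself, one has to confirm that the algebraic fixed-point equation solved for it is genuinely consistent (i.e.\ that no circularity creeps in via the P/I corrections) and that the Wiseman-style shift $\tfrac{1}{2}\sqrt{\gamma}(F_D a+a^\ast F_D)$ in $H_D^{[[Z]]}$ exactly cancels the unwanted cross-terms produced when $L^{[[Z]]}(t)$ is substituted into the Lindbladian --- this is the only delicate bookkeeping, but it is controlled by the series product interpretation alluded to in the preceding discussion of Wiseman's direct feedback.
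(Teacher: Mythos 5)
Your proposal is correct and follows essentially the same route as the paper: superpose the P and I Hamiltonian corrections onto the D-modified coupling $L^{[[Z]]}(t)$, recompute $\mathscr{L}_t^{[[Z]]}a$, take conditional expectations and solve the resulting implicit equation for $\varpi_t^{[[Z]]}(\mathscr{L}_t^{[[Z]]}a)$ to obtain the $1/(1+k_D)$ prefactor, while reusing the earlier closed form for $\Xi_t^{[[Z]]}$. The only difference is that the paper closes by passing to the output picture and verifying via Laplace transform that the closed-loop transfer function is $\mathsf{H}_{PID}(s)$, a consistency check you omit but which is not needed for the input-picture statement as given.
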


\begin{proof}
We note that $L^{[[Z]]} (t)$ corresponds to what we have already seen as the correct coupling in
(\ref{eq:L_D}) to take account of the D term. Here the operator $F^{[[Z]]} (t)$ will be as in (\ref{eq:L_D})
and $\Xi^{[[Z]]} (t) $ will be as in (\ref{eq:Xi}).
Including the P and I Hamiltonians now changes (\ref{eq:L_a}) to
\begin{eqnarray*}
\mathscr{L}_{t}^{\left[ \left[ Z\right] \right] }a &=&
-\left( \frac{1}{2}\gamma +i\omega \right) a
-\sqrt{\gamma }k_{D}\Xi _{t}^{\left[ \left[ Z\right] \right] }\left( a+a^{\ast }\right)  \\
&&+ k_P \big[ r(t) - \varpi_t^{[[Z]]} (a) \big] +k_I \int_0^t \big[ r(t') - \varpi_{t'} (a) \big] dt' \nonumber \\
&&+k_{D}\left( \dot{r}\left( t\right) -\varpi _{t}^{\left[ \left[ Z\right] \right] }
\big( \mathscr{L}_{t}^{\left[ \left[ Z\right] \right] }a\big)
+k_D \lambda _{t}^{\left[ \left[ Z\right] \right] }\Xi _{t}^{\left[ \left[ Z%
\right] \right] }\right) ,
\end{eqnarray*}
and so, by taking conditional expectations, we arrive at (\ref{eq:L_a_PID}). 

Finally the output picture filter equation (that is, the one we really want as external observers) is readily obtained from (\ref{eq:filter_input_PID}) and (\ref{eq:L_a_PID}):
\begin{eqnarray}
d\pi_t^{[[Y]]} (a)  &=& - \frac{1}{1+k_D} \big( \frac{1}{2} \gamma +i \omega \big) \pi_t^{[[Y]]} (a) 
\nonumber \\
&& +\frac{k_P}{1+k_D} \big[ r(t) - \pi_t^{[[Y]]} (a)\big] + \frac{k_I}{1+k_D}\bigg[ \int_0^t \big( r(t') - \pi_{t'}^{[[Z]]} (a) \big) dt' \bigg] + \frac{k_D}{1+k_D} \dot{r} (t) \nonumber \\
&&+ \frac{\sqrt{\gamma}}{1+k_D}
\big[ \mathscr{V}(t) + \mathscr{W} (t) \big] \, dI(t) .
\label{eq:filter_a_hat_PID_out}
\end{eqnarray}
Taking the Laplace transform of (\ref{eq:filter_a_hat_PID_out}), we see that we obtain precisely (\ref{eq:filter_a_PID_out}) - recall the notation $\hat{a} (t) \equiv \pi_t^{[[Y]]} (a)$ - with $\mathsf{K} (s)$ being the PID transfer function $\mathsf{K}_{PID} (s)
= k_P +  k_I \, \frac{1}{s}+ k_D \, s$, and $\mathsf{H}_{PID} (s)$ the closed loop transfer function
appearing in (\ref{eq:H_PID}).
\end{proof}

As the model is described unequivocally in terms of a controlled $SLH$-model, there can be no doubt
that we have a genuine, physically and mathematically well-defined quantum model.

\subsection{Quantum Tuning \& Loop Shaping}
As PID controllers play such a dominant role in practical feedback loop implementations, there are well-developed methods for modeling and tuning them
\cite{Bechhoefer,AH95}.

For instance, it is standard to mitigate against large transient behavior by modifying the feedback law to
\begin{eqnarray}
\beta = k_P ( \mu r - \hat{a} ) +k_I \int_0^t [ r - \hat {a} ] +k_D [ \nu \dot{r} - \frac{d}{dt} \hat{a} ],
\end{eqnarray}
where $\mu$ and $\nu$ are real numbers called \textit{set point weightings}. If we take $\omega \equiv 0$, and drop the D action, then we arrive at the closed-loop transfer function
\begin{eqnarray}
\mathsf{H} (s) = \frac{ \mu k_P s +k_I}{ s^2 +s ( \frac{1}{2} \gamma +i \omega + k_P  ) +k_I},
\end{eqnarray}
and the denominator may be made equal to some desired 2nd order system, say $s^2 +2 \zeta \omega_0 s + \omega^2_0$, by selecting $k_P = 2 \zeta \omega_0 - \frac{1}{2} \gamma$,
and $k_I = \sqrt{\omega_0}$.
This is a basic example of tuning the PID controller. More sophisticated applications, such as loop shaping, exist and there are a variety of techniques (Ziegler-Nichols, Cohen-Coon, etc.,
as well as several PID tuning software products) available. The illustrative model consider in this paper has been a single cavity mode, and this has a simple 1st order transfer function $\mathsf{G}(s)$. However, it is clear that our analysis extends to multi-mode systems since the main requirement for the theory to work is linear gaussian one, and so we expect that the standard techniques of classical PID control may be readily adapted to linear quantum optical systems, for example. This opens up the possibility of extensions to non-linear models, as well as realistic 

\subsection{Feedback to the Detector}
An alternative strategy is adaptive quantum feedback \cite{Wiseman95,AAS} where, instead of a controlled flow, we consider the situation where the feedback is to the detector. This is depicted in Figure \ref{fig:qfa_loop_detector}.
Here we would make the measured quadrature - the $\tilde{\theta}^{[[Y]]} (t)$ parameter - now dependent on the output. This has been treated in \cite{BvH}. We note that despite the different architecture here and the fact that we are controlling the homodyne measurement scheme rather than the system dynamics, the same principle of being able to transfer between the input picture and the output picture applies.

\begin{figure}[h]
	\centering
		\includegraphics[width=0.750\textwidth]{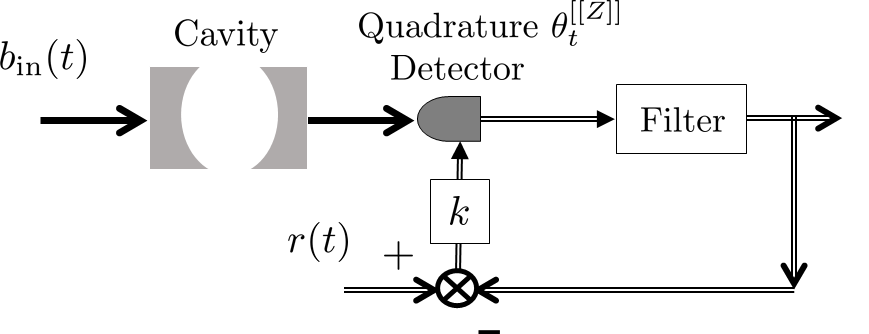}
	\caption{Adaptive control scheme where the quadrature measured, $\theta^{[[Z]]} (t)$ in the input picture, is taken to depend on the filter output.}
	\label{fig:qfa_loop_detector}
\end{figure}

\section{Conclusion}
We have shown that PID controllers can be constructed for quantum systems. At the outset, we had the nagging doubt that the various manipulations of the measured data that can be performed classical by an observer may somehow be incompatible with a consistent quantum model. However, we have shown that this is not the case. At the heart of our discussions is the distinction between the input picture (where we may describe the system as a controlled dynamics and look at the output downline) and the output picture (where we may describe the system in terms of the measured output and filter back upstream). In reality both are necessary, and indeed both are equivalent. To justify the PID controller as a consistent model, we had to set up the feedback law in the output picture, then work back to find the corresponding controlled process description in the input picture. The general PID case is quoted in Theorem \ref{Theorem:PID}, and we may obtain the special cases of P, PI, etc., as simple corollaries. The key point is that the Theorem shows that a controlled $SLH$-dynamical model exists which underpins the feedback model, including the \lq\lq classical" actions done by the controller.
As we have mention, classical PID filters are versatile, well-studied feedback controllers and in principle may be applied to quantum feedback systems. The effect of nonlinearities is an interesting further topic - for instance one might be interested in an anti-wind up scheme - but the $SLH$ theory can be readily extended to nonlinear dynamical models and this will be the subject of further work.

\bibliographystyle{siamplain}

\begin{thebibliography}{99}

\bibitem{Paris_PB} 
C. Sayrin, I. Dotsenko, et al., Real-time quantum feedback prepares and stabilizes photon number states
 Nature 477, 73-77 (1 september 2011)

\bibitem{Rouchon}
P. Rouchon, Models and Feedback Stabilization of Open Quantum Systems
Extended version of the paper attached to an invited conference for the International Congress of Mathematicians in Seoul, August 13 - 21, 2014
arXiv:1407.7810 


\bibitem{BvH} 
L. Bouten, R. van Handel, \lq\lq On the separation principle of quantum control",
In Quantum Stochastics and Information: Statistics, Filtering and Control (V. P. Belavkin and M. I. Guta, eds.), 
World Scientific, (2008)

\bibitem{BvH_ref}
L. Bouten, R. van Handel, \lq\lq Quantum filtering: a reference probability approach", aXiv:math-ph/0508006 

\bibitem{vH_thesis}
R. van Handel, Ph.D. Thesis, Filtering, Stability, and Robustness, California Institute of Technology, 2006,
http://www.princeton.edu/~rvan/thesisf070108.pdf

\bibitem{Wiseman}
H. Wiseman, \lq\lq Quantum theory of continuous feedback", Phys. Rev. A,
49(3):2133–2150, (1994)

\bibitem{G_JMP_2017}
J.E. Gough, Non-Markovian Quantum Feedback Networks II: Controlled Flows, submitted to Journ. Math. Phys.


\bibitem{Belavkin1}  V.P. Belavkin, 
\lq\lq Non-Demolition Measurements, Nonlinear Filtering and Dynamic Programming of Quantum Stochastic Processes", 
Lecture Notes in Control and Inform Sciences \textbf{121} 245--265, Springer--Verlag, Berlin (1989)


\bibitem{Maassen88}
H. Maassen, Theoretical concepts in quantum probability: quantum Markov processes. 
Fractals, quasicrystals, chaos, knots and algebraic quantum mechanics (Maratea, 1987), 287–302, 
NATO Adv. Sci. Inst. Ser. C Math. Phys. Sci., 235, Kluwer Acad. Publ., Dordrecht, (1988)

  
\bibitem{HP84}  R.L. Hudson and K.R. Parthasarathy, \lq\lq Quantum Ito's formula and
stochastic evolutions'', Commun. Math. Phys. \textbf{93}, 301 (1984).


\bibitem{Par92}  K.R. Parthasarathy. 
\emph{An Introduction to Quantum Stochastic Calculus} 
(Birkhauser, 1992).


\bibitem{Gardiner_Collett}  C.W. Gardiner and M.J. Collett. Input and output
in damped quantum systems: Quantum stochastic differential equations and the
master equation. Phys. Rev. A, \textbf{31}(6):37613774, (1985)


\bibitem{Gardiner_Zoller}  C.W. Gardiner and P. Zoller, \ Quantum Noise.
Springer, Berlin (2000)

\bibitem{CKS} J. Combes, J. Kerckhoff, M. Sarovar, \lq\lq The SLH framework for modeling quantum input-output networks",
to appear in Advances in Physics, arXiv:1611.00375 

\bibitem{BouvanHJam07}  L. Bouten, R. van Handel and M.R. James, 
\lq\lq An introduction to quantum filtering",
SIAM Journal on Control and Optimization \textbf{46}, 2199 (2007).

\bibitem{GouJam09a}  J. Gough, M.R. James, 
\lq\lq Quantum Feedback Networks: Hamiltonian Formulation", 
Commun. Math. Phys. \textbf{287}, 1109 (2009).


\bibitem{GouJam09b}  J. Gough, M.R. James, 
\lq\lq The series product and its application to quantum feedforward and feedback networks",
IEEE Trans. on Automatic Control \textbf{54}, 2530 (2009).

\bibitem{NM_QFN_I}
J.E. Gough, \lq\lq Non-Markovian quantum feedback networks I: Quantum transmission lines, lossless bounded real property and limit Markovian channels\rq \rq ,
Journ. Math. Phys. 57, 122101 (2016)


\bibitem{Bechhoefer}
J. Bechhoefer, Feedback for physicists: A tutorial essay on control, Rev. Mod. Phys, 77, 783-836 (2006)

\bibitem{AH95}
K.J. Astrom, T. Hagglund, PID Controllers: Theory, Design and Tuning, ISA publisher, (1995)


\bibitem{WWM} 
P. Warszawski, H. M. Wiseman, and H. Mabuchi,
\lq\lq Quantum trajectories for realistic detection\rq\rq , Phys. Rev. A, 65, 023802 (2002)

\bibitem{Wiseman95}
H. M. Wiseman, Adaptive phase measurements of optical modes: Going beyond the
marginal $q$-distribution, Phys. Rev. Lett. 75 (1995), no. 25, 4587–4590.

\bibitem{AAS} 
M. A. Armen, J. K. Au, J. K. Stockton, A. C. Doherty, and H. Mabuchi, Adaptive
homodyne measurement of optical phase, Phys. Rev. Lett. 89 (2002), 133602


\end{thebibliography}

\end{document}